\newtheorem{lem}{Lemma}
\newtheorem{thry}{Theorem}
\newtheorem{conj}{Conjecture}
\newtheorem{corol}{Corollary}
\newtheorem{defn}{Definition}
\newtheorem{alg}{Algorithm}
\title{Analytical and Numerical Characterizations of Shannon Ordering for Discrete Memoryless Channels}
\author{Yuan Zhang and Cihan Tepedelenlio\u{g}lu
\thanks{The authors are with the School of Electrical,
Computer, and Energy Engineering, Arizona State University, Tempe,
AZ 85287, USA (Email: \{yzhang93, cihan\}@asu.edu). This work was
supported in part by the National Science Foundation under Grant CCF
1117041, and was presented in part at IEEE International Symposium on Information Theory, Jul. 2012.} }
\date{}
\begin{document}

\maketitle

\begin{abstract}
This paper studies several problems concerning channel inclusion,
which is a partial ordering between discrete memoryless channels
(DMCs) proposed by Shannon. Specifically, majorization-based
conditions are derived for channel inclusion between certain DMCs.
Furthermore, under general conditions, channel equivalence defined through Shannon ordering is shown to be the same as
permutation of input and output symbols. The determination of
channel inclusion is considered as a convex optimization problem,
and the sparsity of the weights related to the representation of the
worse DMC in terms of the better one is revealed when channel
inclusion holds between two DMCs. For the exploitation of this
sparsity, an effective iterative algorithm is established based on
modifying the orthogonal matching pursuit algorithm.
\end{abstract}

\section{Introduction}

The comparison between different communication channels has been a
long-standing problem since the establishment of Shannon theory.
Such comparisons are usually established through partial ordering
between two channels. Channel inclusion \cite{shannon58} is a
partial ordering defined for DMCs, when one DMC is obtained through
randomization at both the input and the output of another, and the
latter is said to include the former. Such an ordering between two
DMCs implies that for any code over the worse (included) DMC, there
exists a code of the same rate over the better (including) one with
a lower error rate. This enables ordering functions such as the error exponent or channel dispersion. Channel inclusion can be viewed as a generalization of the comparisons of statistical experiments
established in \cite{blackwell51,blackwell53}, in the sense that the
latter involves output randomization (degradation) but not input
randomization. There are also other kinds of channel ordering. For example,
more capable ordering and less noisy ordering \cite{korner75} enable the characterization of capacity regions of broadcast
channels. The partial ordering between finite-state Markov channels
is analyzed in \cite{eckford07,eckford09}. Our focus in this paper will be exclusively on channel inclusion as defined by Shannon \cite{shannon58}.

It is of interest to know how
it can be determined if one DMC includes another either
analytically, or numerically. To the best of our knowledge,
regarding the conditions for channel inclusion, the only results
beyond Shannon's paper \cite{shannon58} are provided in
\cite{helgert67,raginsky11}, and there is not yet any discussion on
the numerical characterization of channel inclusion in existing
literature. In this paper, we derive conditions for channel
inclusion between DMCs with certain special structure, as well as
channel equivalence, which complements the results in
\cite{helgert67} in useful ways, and relate channel inclusion to the
well-established majorization theory. In addition, we delineate the
computational aspects of channel inclusion, by formulating a convex
optimization problem for determining if one DMC includes another,
using a sparse representation. Compared to the conference version
\cite{ow12a}, this paper contains significant extensions.
As an example, for the purpose of obtaining a sparse solution, we develop an iterative algorithm based on
modifying orthogonal matching pursuit (OMP) and demonstrate its effectiveness. Moreover, we also find necessary and sufficient conditions for channel equivalence.

The rest of this paper is organized as follows. Section \ref{preli}
establishes the notation and describes existing literature. Section
\ref{cond_ch_inc} derives conditions for channel inclusion between
DMCs with special structure. Computational issues regarding channel inclusion are addressed in
Section \ref{comp}, followed by Section \ref{omp_mods} establishing
a sparsity-inducing algorithm for establishing channel inclusion. Section \ref{concl} concludes
the paper.

\section{Notations and Preliminaries} \label{preli}

Throughout this paper, a DMC is represented by a row-stochastic
matrix, i.e. a matrix with all entries being non-negative and each
row summing up to $1$. All the vectors involved are row vectors
unless otherwise specified. The entry of matrix $K$ with index $(i,j)$ and
the entry of vector $\mathbf{a}$ with index $i$ are denoted by
$[K]_{(i,j)}$ and $a_{(i)}$ respectively. The maximum (minimum)
entry of vector $\mathbf{a}$ is denoted by $\max \{ \mathbf{a} \}$
($\min \{ \mathbf{a} \}$), and $\mathbf{a} \geq {\bf 0}$ specifies
entry-wise non-negativity. The $i$-th row and $j$-th column of $K$
are denoted by $[K]_{(i,:)}$ and $[K]_{(:,j)}$ respectively. The set of indices from $n_1$ to $n_2 \geq n_1$ is denoted by $n_1:n_2$. The $n \times m$ matrix with all entries being $0$ (or $1$) is denoted by
$0_{n \times m}$ ($1_{n \times m}$). Also for convenience, we identify a DMC and its stochastic matrix, and apply
the terms ``square'', ``doubly stochastic'' and ``circulant'' for
matrices directly to DMCs. We next reiterate some of the definitions
and results in the literature related to this paper. We have the
following definitions.
\begin{defn} \label{defn1}
A DMC described by $n_1 \times m_1$ matrix $K_1$ is said to {\it include} \cite{shannon58}
another $n_2 \times m_2$ DMC $K_2$, denoted by $K_1 \supseteq K_2$ or $K_2 \subseteq
K_1$, if there exists a probability vector $\mathbf{g} \in
\mathbb{R}^{\beta}_+$ and $\beta$ pairs of stochastic matrices $\{
R_{\alpha},T_{\alpha} \}_{\alpha=1}^{\beta}$ such that
\begin{equation} \label{def_ci}
\sum_{\alpha=1}^{\beta} g_{(\alpha)} R_{\alpha} K_1 T_{\alpha}= K_2
.
\end{equation}
$K_1$ and $K_2$ are said to be {\it equivalent} if $K_1 \supseteq K_2$ and
$K_2 \supseteq K_1$. We say $K_2$ is strictly included in $K_1$,
denoted by $K_2 \subset K_1$, if $K_2 \subseteq K_1$ and $K_1
\nsubseteq K_2$. Intuitively, $K_2$ can be thought of as an input/output processed
version of $K_1$, with $g_{(\alpha)}$ being the probability that
$K_1$ is processed by $R_{\alpha}$ (input) and $T_{\alpha}$
(output). An operational interpretation of this definition is given
in Figure \ref{ch_inc4}, where, to ``simulate'' $K_2$, the channel
$R_{\alpha} K_1 T_{\alpha}$ is used with probability $g_{(\alpha)}$.
\end{defn}

\begin{figure}[!ht]
\begin{minipage}{1.0\textwidth}
\begin{center}
\includegraphics[height=30mm,keepaspectratio]{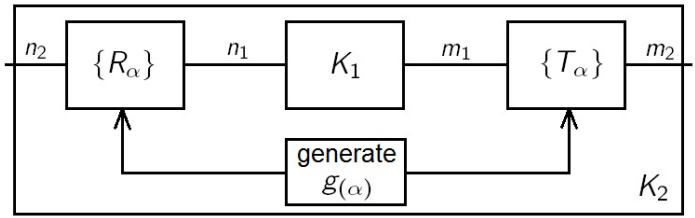}
\caption{Operational interpretation of $K_2 \subseteq K_1$, with
$K_1$ of size $n_1 \times m_1$ and $K_2$ of size $n_2 \times m_2$}
\label{ch_inc4}
\end{center}
\end{minipage}
\end{figure}

\begin{defn} \label{defn2}
A DMC $K_2$ is said to be a (output) degraded version
\cite{blackwell51,blackwell53} of another DMC $K_1$, if there exists
a stochastic matrix $T$ such that $K_1 T= K_2$.
\end{defn}

Note that output degradation in Definition \ref{defn2} is stronger than inclusion in Definition \ref{defn1}. There are several analytical conditions for channel inclusion derived in \cite{helgert67} for a special case of Definition
\ref{defn1} with $\beta=1$. Reference \cite{helgert67} considers two
kinds of DMCs, given by a $2 \times 2$ full-rank stochastic matrix
$P$, and an $n \times n$ stochastic matrix with identical diagonal
entries $p$ and identical off-diagonal entries $(1-p)/(n-1)$,
respectively. Necessary and sufficient conditions for $K_2= R K_1 T$ where $R$ and $T$ are stochastic matrices, are
derived for the cases in which $K_1$ and $K_2$ are of either of the
two kinds. Note that this assumes $\beta= 1$ in (\ref{def_ci}) and is with loss of generality. Conditions of inclusion for the general $\beta>1$ case have not yet been considered in the literature.

Channel inclusion can be equivalently defined with $R_{\alpha}$'s
and $T_{\alpha}$'s in Definition \ref{defn1} being stochastic
matrices in which all the entries are $0$ or $1$, as stated in
\cite{shannon58}, where $R_{\alpha}$'s and $T_{\alpha}$'s of this
kind are called {\it pure} matrices (or pure channels). This is easily corroborated
based on the fact that every stochastic matrix can be represented as
a convex combination of such pure matrices. This is due to the
fact that the set of stochastic matrices is convex and that $(0,1)$ stochastic matrices are extremal points of this set
\cite[Theorem 1]{schaefer82}. When $R_{\alpha}$ and $T_{\alpha}$ are pure matrices, the product $R_{\alpha} K_1 T_{\alpha}$ can be interpreted as a DMC whose input labels and output labels have been either permuted or combined. Therefore channel inclusion implies that the included DMC $K_2$ is in the convex hull of all such matrices, as seen in (\ref{def_ci}).

By considering $N$ uses of a DMC $K$, we equivalently have the DMC
$K^{\otimes N}$ which is the $N$-fold Kronecker product of $K$. We
have the following theorem, which was mentioned in \cite{shannon58} without a detailed proof.
\begin{thry} \label{th_kron}
$K_2 \subseteq K_1$ implies $K_2^{\otimes N} \subseteq K_1^{\otimes
N}$.
\end{thry}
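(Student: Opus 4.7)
The plan is to construct an explicit Shannon-type decomposition of $K_2^{\otimes N}$ in terms of $K_1^{\otimes N}$ directly from the given decomposition of $K_2$ in terms of $K_1$, using the mixed-product property of the Kronecker product $(A \otimes C)(B \otimes D) = (AB) \otimes (CD)$ together with multilinearity of $\otimes$. Induction on $N$ is an equivalent way to organize this, but I prefer the direct approach because it produces the witnesses in closed form.

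First, I would start from the hypothesis $\sum_{\alpha=1}^{\beta} g_{(\alpha)} R_{\alpha} K_1 T_{\alpha} = K_2$ and write
\begin{equation*}
K_2^{\otimes N} = \bigotimes_{k=1}^{N} \left( \sum_{\alpha_k=1}^{\beta} g_{(\alpha_k)} R_{\alpha_k} K_1 T_{\alpha_k} \right).
\end{equation*}
Expanding by multilinearity of $\otimes$ gives a sum over multi-indices $\boldsymbol{\alpha} = (\alpha_1,\dots,\alpha_N) \in \{1,\dots,\beta\}^N$. Then I would repeatedly apply the mixed-product property to pull each $K_1$ factor into a single central $K_1^{\otimes N}$, obtaining
\begin{equation*}
K_2^{\otimes N} = \sum_{\boldsymbol{\alpha}} \left( \prod_{k=1}^{N} g_{(\alpha_k)} \right) \left( \bigotimes_{k=1}^{N} R_{\alpha_k} \right) K_1^{\otimes N} \left( \bigotimes_{k=1}^{N} T_{\alpha_k} \right).
\end{equation*}

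Next I would verify that this is a valid Shannon decomposition in the sense of Definition \ref{defn1}. Define $\tilde{g}_{\boldsymbol{\alpha}} := \prod_{k=1}^{N} g_{(\alpha_k)}$, $\tilde{R}_{\boldsymbol{\alpha}} := \bigotimes_{k=1}^{N} R_{\alpha_k}$, and $\tilde{T}_{\boldsymbol{\alpha}} := \bigotimes_{k=1}^{N} T_{\alpha_k}$. The weights $\tilde{g}_{\boldsymbol{\alpha}}$ are non-negative and sum to $\bigl(\sum_{\alpha} g_{(\alpha)}\bigr)^N = 1$, so they form a probability vector indexed by $\boldsymbol{\alpha}$. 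The matrices $\tilde{R}_{\boldsymbol{\alpha}}$ and $\tilde{T}_{\boldsymbol{\alpha}}$ are stochastic because the Kronecker product preserves non-negativity and row sums (the row sums multiply to $1^N = 1$). Hence the displayed identity is exactly a decomposition of the form required by Definition \ref{defn1}, with $\beta^N$ terms, establishing $K_2^{\otimes N} \subseteq K_1^{\otimes N}$.

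There is no real obstacle; the only points that require minor care are the bookkeeping when applying the mixed-product identity $N-1$ times (which is where induction becomes a cleaner alternative) and the sanity checks that Kronecker products of stochastic matrices remain stochastic and that tensor products of probability vectors remain probability vectors. Both are standard and immediate from the definitions.
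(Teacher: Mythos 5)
Your proposal is correct and follows essentially the same route as the paper's proof: expand $K_2^{\otimes N}$ by multilinearity into $\beta^N$ terms and apply the mixed-product property repeatedly to obtain the decomposition $\sum_{\boldsymbol{\alpha}} (\prod_k g_{(\alpha_k)}) (\bigotimes_k R_{\alpha_k}) K_1^{\otimes N} (\bigotimes_k T_{\alpha_k})$. Your explicit verification that the product weights form a probability vector and that Kronecker products of stochastic matrices remain stochastic is a welcome addition that the paper leaves implicit.
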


\begin{proof}
See \ref{pr_kron}.
\end{proof}

As shown in \cite{shannon58}, $K_2 \subseteq K_1$ has the
implication that if there is a set of $M$ code words $\{ w_l
\}_{l=1}^M$ of length $N$, such that an error rate of ${\rm P_e}$ is
achieved with the code words being used with probabilities $\{ p_l
\}_{l=1}^M$ under $K_2$, then there exists a set of $M$ code words
of length $N$, such that an error rate of ${\rm P_e'} \leq {\rm
P_e}$ is achieved under $K_1$ with the code words being used with
probabilities $\{ p_l \}_{l=1}^M$. In \cite[p.116]{csiszarbook81},
this implication is stated as one DMC being {\it better in the
Shannon sense} than another (different from channel inclusion
ordering itself), and it is pointed out that $K_1 \supseteq K_2$ is
a sufficient but not necessary condition for $K_1$ to be better in
the Shannon sense than $K_2$, with the proof provided in
\cite{karmazin64}. This ordering of error rate in turn implies that
the capacity of $K_1$ is no less than the capacity of $K_2$, and the
same ordering holds for their error exponents.

Channel inclusion, as defined, is a {\it partial} order between two
DMCs: it is possible to have two DMCs $K_1$ and $K_2$ such that $K_1
\nsupseteq K_2$ and $K_2 \nsupseteq K_1$. For the purpose of making
it possible to compare an arbitrary pair of DMCs, a metric based on
the total variation distance, namely Shannon deficiency is
introduced in \cite{raginsky11}. In our notation, the Shannon
deficiency of $K_1$ with respect to $K_2$ is defined as
\begin{equation} \label{sh_defic}
\delta_S(K_1,K_2) \triangleq \inf_{\beta \in \mathbb{N}}
\inf_{\mathbf{g}; R_{\alpha},T_{\alpha} } \left\| \left(
\sum_{\alpha=1}^{\beta} g_{(\alpha)} R_{\alpha} K_1 T_{\alpha}- K_2
\right)^T \right\|_{\infty}
\end{equation}
where $\mathbf{g} \in \mathbb{R}^{\beta}_+$ is a probability vector,
$R_{\alpha}$'s and $T_{\alpha}$'s are stochastic matrices, $\| A
\|_{\infty} \triangleq \max_{i} \|[A]_{(i,:)}\|_1= \| A^T \|_1$ is
the $\infty$-norm of matrix $A$, and we impose matrix transpose
since we treat channel matrices as row-stochastic instead of
column-stochastic. Intuitively, the above Shannon deficiency
quantifies how far $K_1$ is from including $K_2$. Other useful deficiency-like quantities
are established in \cite{raginsky11} by substituting the total
variation distance with divergence-based metrics obeying a data
processing inequality between probability distributions.

\section{Analytical Conditions for Channel Equivalence and Inclusion} \label{cond_ch_inc}

In general, given two DMCs $K_1$ and $K_2$, there is no
straightforward method to check if one includes the other based on
their entries. Nevertheless, it is possible to characterize the
conditions for channel inclusion, for the cases in which both $K_1$ and $K_2$
have structure. In this section, we derive conditions for the
cases of doubly stochastic and circulant DMCs. For the case of
equivalence between two DMCs, we establish a necessary and
sufficient condition which is effectively applicable to any DMCs. We first define some useful notions.

\begin{defn} \label{defn3}
For two vectors $\mathbf{a}, \mathbf{b} \in \mathbb{R}^n$,
$\mathbf{a}$ is said to majorize (or dominate) $\mathbf{b}$, written
$\mathbf{a} \succ \mathbf{b}$, if and only if $\sum_{i=1}^k
a_{(i)}^{\downarrow} \geq \sum_{i=1}^k b_{(i)}^{\downarrow}$ for
$k=1,\dots,n-1$ and $\sum_{i=1}^n a_{(i)} = \sum_{i=1}^n b_{(i)}$,
where $a^{\downarrow}_{(i)}$ and $b^{\downarrow}_{(i)}$ are entries
of $\mathbf{a}$ and $\mathbf{b}$ sorted in decreasing order.
\end{defn}

\begin{defn} \label{defn4}
A circulant matrix is a square matrix in which the $i$-th row is
generated from cyclic shift of the first row by $i-1$ positions to
the right.
\end{defn}

\begin{defn} \label{defn5}
An $n \times n$ matrix $P$ is said to be doubly stochastic if the
following conditions are satisfied: {\bf (i)} $[P]_{(i,j)} \geq 0$
for $i,j=1,\dots,n$; {\bf (ii)} $\sum_i [P]_{(i,j)}=1$ for
$j=1,\dots,n$; {\bf (iii)} $\sum_j [P]_{(i,j)}=1$ for $i=1,\dots,n$.
\end{defn}

\begin{defn} \label{defn6}
A DMC is called symmetric if its rows are permutations of each
other, and its columns are permutations of each other \cite[p.190]{coverbook06}.
\end{defn}

It is easy to verify that if a symmetric DMC is square, then it must
be doubly stochastic. In the next section, we will focus mostly on
square DMCs (i.e. DMCs with equal size input and output
alphabets), and we assume this condition unless otherwise specified.

\subsection{Equivalence Condition between DMCs} \label{equiv1}

We address the general condition for two DMCs to be
equivalent, which has not been considered in the literature. By
imposing some mild assumptions, we have the following theorem which
gives the equivalence condition between two DMCs.
\begin{thry} \label{thr_equiv}
Let two DMCs $K_1$ and $K_2$ satisfying the following three assumptions
\begin{itemize}
\begin{item}
{\bf AS1} Capacity-achieving input distribution(s) contain no zero entry; That is, the capacity is not achieved if some of the input symbols is not used;
\end{item}
\begin{item}
{\bf AS2} There is no all-zero column, and no column being a multiple of another;
\end{item}
\begin{item}
{\bf AS3} If $K_1= P_1 K_1 P_2 D$ with permutation matrices $P_1$, $P_2$ and diagonal matrix $D$, then it is required that $P_1$, $P_2$ and $D$ are identity matrices; That is, by permuting the rows and columns of $K_1$, it is not possible to obtain a DMC whose columns are proportional to $K_1$. This property also applies for $K_2$.
\end{item}
\end{itemize}
Then a necessary and sufficient condition of $K_1$ being equivalent to $K_2$ is that $K_2= R K_1 T$ with $R$ and
$T$ being permutation matrices (thereby requiring $K_1$ and $K_2$ being of the same size $n \times m$).
\end{thry}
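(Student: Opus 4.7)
The sufficiency direction is immediate: if $K_2 = R K_1 T$ with $R$ and $T$ permutation matrices, then $R^T$ and $T^T$ are also permutation (hence stochastic) matrices and $K_1 = R^T K_2 T^T$, so both inclusions hold. For necessity, I would invoke the pure-matrix representation noted after Definition \ref{defn1} to write
\begin{equation*}
K_2 = \sum_{\alpha} g_{(\alpha)} R_{\alpha} K_1 T_{\alpha}, \qquad K_1 = \sum_{\beta} h_{(\beta)} R'_{\beta} K_2 T'_{\beta},
\end{equation*}
with all the $R_{\alpha}, T_{\alpha}, R'_{\beta}, T'_{\beta}$ pure (that is, $(0,1)$-stochastic), and argue in two stages.

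\emph{Stage 1 (each pure matrix is a permutation).} Since channel inclusion preserves Shannon capacity, the equivalence $K_1 \equiv K_2$ gives $C(K_1) = C(K_2)$. Using convexity of $I(\mathbf{p}, \cdot)$ in the channel matrix together with the data-processing inequality,
\begin{equation*}
C(K_1) = C(K_2) \leq \sum_{\alpha} g_{(\alpha)} C(R_{\alpha} K_1 T_{\alpha}) \leq C(K_1),
\end{equation*}
so $C(R_{\alpha} K_1 T_{\alpha}) = C(K_1)$ for every $\alpha$ with $g_{(\alpha)} > 0$. A pure $R_{\alpha}$ is a deterministic map from its $n_2$ inputs into the rows of $K_1$; if this map were not surjective, restricting to its image would produce a capacity-achieving distribution for $K_1$ with a zero entry, contradicting AS1, so $n_2 \geq n_1$. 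Similarly, a pure $T_{\alpha}$ deterministically merges columns of $K_1$, and merging any two distinct columns under a fully-supported input distribution strictly decreases the mutual information unless the columns are proportional, which AS2 forbids; hence $T_{\alpha}$ embeds columns injectively and $m_2 \geq m_1$. Running the same argument on the reverse representation yields $n_1 = n_2$ and $m_1 = m_2$, and every $R_{\alpha}, T_{\alpha}, R'_{\beta}, T'_{\beta}$ is a permutation matrix.

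\emph{Stage 2 (reduction to one pair via AS3).} Substituting the second representation into the first yields
\begin{equation*}
K_1 = \sum_{\alpha,\beta} h_{(\beta)} g_{(\alpha)} (R'_{\beta} R_{\alpha}) K_1 (T_{\alpha} T'_{\beta}),
\end{equation*}
a convex combination of matrices each obtained from $K_1$ by a row-and-column permutation, so each summand has the same multiset of entries and in particular the same Frobenius norm as $K_1$. Applying Jensen's inequality entrywise with the convex map $x \mapsto x^2$, the chain $\|K_1\|_F^2 \leq \sum h_{(\beta)} g_{(\alpha)} \|K_1\|_F^2 = \|K_1\|_F^2$ must be tight, forcing every summand to equal $K_1$. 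Invoking AS3 with the diagonal matrix taken to be the identity then gives $R'_{\beta} R_{\alpha} = I$ and $T_{\alpha} T'_{\beta} = I$ for each $(\alpha,\beta)$ with positive weight; fixing any such $\beta$ shows that all $R_{\alpha}$ coincide and all $T_{\alpha}$ coincide, so $K_2 = R K_1 T$ for a single pair of permutations $(R, T)$.

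I expect Stage 1 to be the main obstacle, because it is where the precise content of AS1 and AS2 must be extracted as structural statements (``no row of $K_1$ is missed'' and ``no two columns of $K_1$ are merged'') and then combined with the reverse direction to pin down the dimensions. Stage 2 is then a short Frobenius-norm/Jensen argument that triggers AS3 in its cleanest form ($D = I$).
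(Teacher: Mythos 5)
Your proof is correct, and the sufficiency direction and the broad strategy (capacity equality, convexity of mutual information in the channel, data processing, and the equality conditions of the log-sum/Jensen inequalities, each triggering one of AS1--AS3) match the paper's. But you execute the necessity direction along a genuinely different route. The paper immediately substitutes one inclusion into the other to get $K_1$ as a convex combination of terms $R_{1,\alpha_1}R_{2,\alpha_2}K_1T_{2,\alpha_2}T_{1,\alpha_1}$, proves a lemma that each \emph{composite} factor is full-rank and that $\beta_1=\beta_2=1$ (the latter via a second log-sum equality analysis showing the terms differ by $K_1=P_1K_1P_2D$ with a general positive diagonal $D$, killed by AS3), and only then recovers the individual factors through a chain of rank inequalities $\mathrm{rank}(R_1)\ge n_1$, $\mathrm{rank}(R_1)\ge n_2$, etc., to force $n_1=n_2$, $m_1=m_2$. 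You instead extract the structure from each single decomposition separately: surjectivity of each $R_\alpha$ onto the rows of $K_1$ (via AS1) and injectivity of each $T_\alpha$ on columns (via AS2), with the reverse decomposition pinning down the dimensions --- this dispenses with the rank bookkeeping entirely. Your Stage 2 is also cleaner than the paper's: since by then every summand is a genuine row-and-column permutation of $K_1$, the Frobenius-norm/Jensen argument forces every summand to \emph{equal} $K_1$, so you only ever need AS3 in the special case $D=I$, whereas the paper needs its full strength. The one step you should spell out is why the column-merging argument applies ``under a fully-supported input distribution'': you must observe that equality in $I(p^*,R_\alpha K_1T_\alpha)\le I(p^*R_\alpha,K_1)\le C(K_1)$ forces $p^*R_\alpha$ to be capacity-achieving for $K_1$, hence fully supported by AS1, and that two columns of $R_\alpha K_1$ are proportional iff the corresponding columns of $K_1$ are; both are easy, and the paper's own treatment of this step is at a comparable level of detail.
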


\begin{proof}
See \ref{pr_thr_equiv}.
\end{proof}

We have the following remarks about Theorem \ref{thr_equiv}. {\bf AS1} is verifiable through Blahut-Arimoto Algorithm \cite[ch. 13]{coverbook06}. Specifically, capacities can be obtained for the $n \times m$ DMC $K$ itself and the ones obtained by removing the $k$-th row from $K$ for $k=1,\dots,n$, and if the capacity is always reduced by removing a row, then the capacity-achieving input distribution of $K$ should have no zero entry. {\bf AS2} can be verified simply by inspection. Also, since DMCs are usually of small sizes in practice, it is viable to verify {\bf AS3} by inspection. For example, no column being a multiple of some entry-permuted version of another column makes a sufficient condition for {\bf AS3} to hold.

If two DMCs satisfying the above three assumptions are equivalent, there is an eigenvalue-based approach
for finding the permutation matrices without searching for all $n!m!$ such
permutations. Starting from $K_2= R K_1 T$, and $R^T= R^{-1}$, $T^T=
T^{-1}$ which is a property of permutation matrices, we have $K_2
K_2^T= R K_1 K_1^T R^{-1}$ which leads to the determination of $R$.
In order to do this, the first step is to perform the eigenvalue
decomposition: $K_1 K_1^T= Q_1 \Lambda Q_1^{-1}$ and $K_2 K_2^T= Q_2
\Lambda Q_2^{-1}$, where $\Lambda$ is a diagonal matrix, $Q_1$ and
$Q_2$ are both unitary matrices. Notice that it is necessary for
$K_1 K_1^T$ and $K_2 K_2^T$ to have the same set of eigenvalues,
otherwise $K_1$ and $K_2$ cannot be equivalent. Once we have these
decompositions, we can immediately obtain $R= Q_2 Q_1^{-1}$, which
is required to be a permutation matrix for $K_1$ and $K_2$ to be
equivalent. The determination of $T$ can also be made using the same
approach based on $K_2^T K_2= T^{-1} K_1^T K_1 T$, i.e. following
from the eigenvalue decompositions $K_1^T K_1= Q_3 \Sigma Q_3^{-1}$
and $K_2^T K_2= Q_4 \Sigma Q_4^{-1}$, $T= Q_4 Q_3^{-1}$ can be
obtained.

\subsection{Inclusion Conditions for Doubly Stochastic and Circulant DMCs}

Considering that doubly stochastic matrices have significant
theoretical importance, and doubly stochastic DMCs can be thought of
as a generalization of square symmetric DMCs, we first introduce the
following theorem
\begin{thry} \label{thr1}
Let $K_1$ and $K_2$ be $n \times n$ doubly stochastic DMCs, with
$\mathbf{w}_1$ and $\mathbf{w}_2$ being the $n^2 \times 1$ vectors
containing all the entries of $K_1$ and $K_2$ respectively. Then
$\mathbf{w}_2 \prec \mathbf{w}_1$ is a necessary condition for $K_2
\subseteq K_1$.
\end{thry}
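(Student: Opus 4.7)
The plan is to invoke Hardy--Littlewood--P\'olya's characterization: $\mathbf{w}_2 \prec \mathbf{w}_1$ is equivalent to $\mathbf{w}_2$ being a convex combination of the permutations of $\mathbf{w}_1$ in $\mathbb{R}^{n^2}$. Accordingly, the target is to rewrite the inclusion $K_2 = \sum_\alpha g_{(\alpha)} R_\alpha K_1 T_\alpha$ so that each $R_\alpha$ and $T_\alpha$ is a permutation matrix. When that is achieved, each $R_\alpha K_1 T_\alpha$ is itself doubly stochastic and its entries are merely a rearrangement of those of $K_1$, so $\mathbf{w}(R_\alpha K_1 T_\alpha) = \sigma_\alpha \mathbf{w}_1$ for some permutation $\sigma_\alpha$ of $[n^2]$, and $\mathbf{w}_2 = \sum_\alpha g_{(\alpha)} \sigma_\alpha \mathbf{w}_1$ then yields the desired majorization immediately.

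To reach this permutation representation, I would first use the pure-matrix characterization stated in the preliminaries to assume without loss of generality that the $R_\alpha$ and $T_\alpha$ are $(0,1)$ row-stochastic, and then argue that they can in fact be taken as permutation matrices once both $K_1$ and $K_2$ are doubly stochastic. The route I would pursue is polytope-theoretic: the set $\{K_2: K_2 \subseteq K_1\}$ equals $\mathrm{conv}\{RK_1T:R,T\,\text{pure row-stochastic}\}$, and one aims to show that its intersection with the affine subspace of doubly stochastic matrices coincides with $\mathrm{conv}\{PK_1Q:P,Q\,\text{permutations}\}$. An extreme-point analysis, combined with the observation that a pure $R$ or $T$ which is not a permutation produces an $RK_1T$ that is \emph{not} doubly stochastic by itself, should allow any doubly stochastic element of the larger hull to be reassembled into a combination involving only permutation $R,T$.

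The main obstacle will be carrying out this extreme-point analysis cleanly for general $n$, since the required cancellation of column-stochasticity violations happens across the $\alpha$-summation rather than term by term. As an alternative route that side-steps this reduction, one can try to directly verify $\sum_{i,j}\phi([K_2]_{(i,j)}) \leq \sum_{i,j}\phi([K_1]_{(i,j)})$ for every convex $\phi$ by applying Jensen's inequality to $[K_2]_{(i,j)} = \sum_\alpha g_{(\alpha)}[R_\alpha K_1 T_\alpha]_{(i,j)}$ and then bounding $\sum_\alpha g_{(\alpha)}\sum_{i,j}\phi([R_\alpha K_1 T_\alpha]_{(i,j)})$ against $\sum_{i,j}\phi([K_1]_{(i,j)})$ by a double-counting identity that uses row-stochasticity of $K_1$ together with the column-sum-one constraint inherited from the doubly stochasticity of $K_2$; the hard part there is again that individual $R_\alpha K_1 T_\alpha$ terms can exceed $\sum_{i,j}\phi([K_1]_{(i,j)})$ and the argument must extract the needed cancellation from the aggregate constraint.
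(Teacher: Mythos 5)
Your overall strategy is the same as the paper's: reduce to the case where every $R_\alpha$ and $T_\alpha$ is a permutation matrix, note that each $R_\alpha K_1 T_\alpha$ is then a mere rearrangement of the entries of $K_1$, and conclude by the Birkhoff/Hardy--Littlewood--P\'olya characterization of majorization (the paper phrases this last step as $\mathbf{w}_2=P\mathbf{w}_1$ with the $n^2\times n^2$ mixing matrix $P$ doubly stochastic, which is equivalent to your ``convex combination of permutations of $\mathbf{w}_1$'' formulation). The endgame of your argument is correct and matches the paper's.

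The genuine gap is precisely the step you yourself flag as ``the main obstacle'': you never establish that a doubly stochastic $K_2$ included in a doubly stochastic $K_1$ admits a representation using only permutation $R_\alpha,T_\alpha$. You propose an extreme-point analysis of the intersection of $\mathrm{conv}\{RK_1T\}$ with the doubly stochastic affine subspace but do not carry it out, and the obstruction you identify is real: the failure of column-stochasticity can cancel across the $\alpha$-sum rather than term by term. For instance, with $K_1=I$ of size $2\times 2$ and the two non-permutation pure matrices $R$ with both rows equal to $(1,0)$ and $R'$ with both rows equal to $(0,1)$, the mixture $\tfrac12 RK_1+\tfrac12 R'K_1$ is the doubly stochastic all-$\tfrac12$ matrix even though neither term is doubly stochastic. (This also shows that one cannot argue entrywise via a mixing matrix without the permutation reduction: the matrix taking $\mathbf{w}_1$ to $\mathbf{w}_2$ built directly from general pure $R_\alpha,T_\alpha$ need not be doubly stochastic.) The paper itself disposes of this step with a one-sentence appeal to Birkhoff's theorem --- asserting that a non-permutation choice would push the mixture outside the Birkhoff polytope --- which, in light of the example above, is only meaningful as an aggregate re-representation claim and is stated quite tersely; so your instinct that this is where the work lies is sound. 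But a proposal that locates the hard step and leaves it open, together with an alternative route (the convex-$\phi$ test) that you acknowledge stalls on the same cancellation issue, does not constitute a proof. To complete the argument you would need to actually prove the polytope identity you conjecture, or otherwise exploit the double stochasticity of both $K_1$ and $K_2$ to produce a doubly stochastic mixing matrix.
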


\begin{proof}
See \ref{pr_thr1}.
\end{proof}

It should be pointed out that the above mentioned condition is not
sufficient. Otherwise, consider
\begin{equation} \label{ex55}
K_1= \begin{bmatrix} 1 & 2 & 3 & 4 & 5 \\ 5 & 1 & 2 & 3 & 4 \\ 4 & 5
& 1 & 2 & 3 \\ 3 & 4 & 5 & 1 & 2 \\ 2 & 3 & 4 & 5 & 1
\end{bmatrix} /15, K_2= \begin{bmatrix} 1 & 2 & 3 & 4 & 5 \\ 5 & 1 & 2 & 3 & 4 \\ 3 &
4 & 1 & 5 & 2 \\ 2 & 5 & 4 & 1 & 3 \\ 4 & 3 & 5 & 2 & 1
\end{bmatrix} /15
\end{equation}
it would be implied that $K_1$ and $K_2$ are equivalent. However,
based on Theorem \ref{thr_equiv}, it can be verified that $K_1$ and
$K_2$ are not equivalent since there do not exist permutation
matrices $R$ and $T$ such that $K_2= R K_1 T$ due to different sets of singular values of $K_1$ and $K_2$, thereby implying that
$\mathbf{w}_2 \prec \mathbf{w}_1$ is not sufficient for $K_2
\subseteq K_1$.

Consider the case of both $K_1$ and $K_2$ being $n \times n$
circulant, which are used to model channel noise captured by modulo arithmetic and has applications in discrete degraded interference
channels \cite{liu08}. We have the following result:
\begin{thry} \label{thr2}
Let $K_1$ and $K_2$ be $n \times n$ circulant DMCs, with vectors
$\mathbf{v}_1$ and $\mathbf{v}_2$ being their first rows,
respectively. Then for $K_2 \subseteq K_1$, a necessary condition is
$\mathbf{v}_2 \prec \mathbf{v}_1$. A sufficient condition is that
$\mathbf{v}_2$ can be represented as the circular convolution of
$\mathbf{v}_1$ and another probability vector $\mathbf{x}$ such that
$\mathbf{v}_1 \circledast \mathbf{x}= \mathbf{v}_2$, which is also
sufficient for output degradation.
\end{thry}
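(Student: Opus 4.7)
The plan is to handle the necessary and sufficient parts separately, leveraging results already in the paper and standard facts about circulant matrices.

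For the necessary condition, I would first observe that every circulant stochastic matrix is automatically doubly stochastic (each column is a cyclic rearrangement of the rows' common entry list, so it sums to $1$). Hence Theorem \ref{thr1} applies and yields $\mathbf{w}_2 \prec \mathbf{w}_1$, where $\mathbf{w}_i$ is the $n^2 \times 1$ vector of all entries of $K_i$. The next step is to reduce this to $\mathbf{v}_2 \prec \mathbf{v}_1$. Because $K_i$ is circulant, $\mathbf{w}_i$ consists of exactly $n$ copies of the entries of $\mathbf{v}_i$. Using the Hardy--Littlewood--P\'olya characterization, $\mathbf{a} \prec \mathbf{b}$ iff $\sum_i \phi(a_{(i)}) \leq \sum_i \phi(b_{(i)})$ for every convex $\phi$, so
\begin{equation*}
n \sum_{i=1}^n \phi(v_{2,(i)}) = \sum_{i=1}^{n^2} \phi(w_{2,(i)}) \leq \sum_{i=1}^{n^2} \phi(w_{1,(i)}) = n \sum_{i=1}^n \phi(v_{1,(i)}),
\end{equation*}
for every convex $\phi$, which is equivalent to $\mathbf{v}_2 \prec \mathbf{v}_1$.

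For the sufficient condition, I would give an explicit construction witnessing output degradation (and hence inclusion, since output degradation is a strictly stronger relation as noted after Definition \ref{defn2}). Let $T$ be the $n \times n$ circulant matrix whose first row is $\mathbf{x}$. Since $\mathbf{x}$ is a probability vector and each row of $T$ is a cyclic shift of $\mathbf{x}$, $T$ is row-stochastic. The product of two circulant matrices is circulant, and its first row is precisely the circular convolution of the two first rows; a direct index computation shows $[K_1 T]_{(1,j)} = \sum_k v_{1,(k)} \, x_{((j-k)\bmod n) + 1} = (\mathbf{v}_1 \circledast \mathbf{x})_{(j)} = v_{2,(j)}$. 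Two circulant matrices sharing a first row coincide, so $K_1 T = K_2$. Taking $\beta = 1$, $R_1 = I$, $T_1 = T$, $g_{(1)} = 1$ in Definition \ref{defn1} gives $K_2 \subseteq K_1$.

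The only non-routine point is the reduction from $\mathbf{w}_2 \prec \mathbf{w}_1$ to $\mathbf{v}_2 \prec \mathbf{v}_1$; a direct manipulation of partial sums of sorted entries is awkward at indices that are not multiples of $n$, which is why I would route it through the convex-function characterization of majorization. Everything else (circulant times circulant being circulant with convolved first rows, output degradation implying channel inclusion) is standard and requires only a line of justification.
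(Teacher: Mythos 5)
Your proposal is correct and follows essentially the same route as the paper: the necessary part invokes Theorem \ref{thr1} on the full entry vectors and reduces the resulting majorization $\mathbf{w}_2 \prec \mathbf{w}_1$ to $\mathbf{v}_2 \prec \mathbf{v}_1$, and the sufficient part constructs the circulant stochastic matrix generated by $\mathbf{x}$ and verifies $K_1 T = K_2$, yielding output degradation and hence inclusion. The only (harmless) divergence is in the reduction step: the paper reads off $\mathbf{v}_2 \prec \mathbf{v}_1$ directly from the partial-sum inequalities of $\mathbf{w}_2 \prec \mathbf{w}_1$ evaluated at indices that are multiples of $n$ (so the direct argument is not actually awkward --- only those indices are needed), whereas you route it through the convex-function characterization of majorization; both are valid.
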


\begin{proof}
See \ref{pr_thr2}.
\end{proof}

It is clear that a $2 \times 2$ doubly stochastic DMC (also known as binary symmetric channel) is circulant
and characterized solely by the cross-over probability, thus the
condition for the inclusion between two $2 \times 2$ such DMCs boils
down to the comparison between their cross-over probabilities.
Furthermore, for $n=3,4$, it is easy to verify that if
an $n \times n$ symmetric DMC is not circulant, there is a circulant
DMC equivalent to it (for $n \geq 5$ there is no such guarantee as seen in (\ref{ex55})),
therefore we can conclude that
\begin{corol} \label{crl1}
For $n=3,4$, let $K_1$ and $K_2$ be $n \times n$ symmetric DMCs,
which are equivalent to circulant DMCs $K_1'$ and $K_2'$
respectively. Let $\mathbf{v}_1$ and $\mathbf{v}_2$ be the first
rows of $K_1'$ and $K_2'$ respectively. Then for $K_2 \subseteq
K_1$, a necessary condition is that $\mathbf{v}_2 \prec
\mathbf{v}_1$, while a sufficient condition is that $\mathbf{v}_2$
can be represented as the circular convolution of $\mathbf{v}_1$ and
another probability vector in $\mathbb{R}^n_+$.
\end{corol}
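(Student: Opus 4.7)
The plan is to reduce the corollary to a direct application of Theorem~\ref{thr2}, leveraging the observation (asserted in the paragraph immediately preceding the corollary) that for $n\in\{3,4\}$ every $n\times n$ symmetric DMC is equivalent to some circulant DMC. The only additional ingredient I need is transitivity of the channel inclusion relation $\subseteq$, which I would record as a preliminary step: if $A=\sum_i g_i R_i B T_i$ and $B=\sum_j h_j R'_j C T'_j$ with stochastic $R_i,T_i,R'_j,T'_j$ and probability weights $\mathbf{g},\mathbf{h}$, then
\begin{equation}
A = \sum_{i,j} g_i h_j (R_i R'_j) \, C \, (T'_j T_i) ,
\end{equation}
and since products of stochastic matrices are stochastic and $\{g_i h_j\}$ remains a probability distribution, $A\subseteq C$.

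With transitivity in hand, the proof becomes a short chase of the equivalences. From $K_1\equiv K_1'$ and $K_2\equiv K_2'$ I have the four inclusions $K_1\subseteq K_1'$, $K_1'\subseteq K_1$, $K_2\subseteq K_2'$ and $K_2'\subseteq K_2$. For necessity, assuming $K_2\subseteq K_1$, the chain $K_2'\subseteq K_2\subseteq K_1\subseteq K_1'$ combined with transitivity yields $K_2'\subseteq K_1'$, so the necessity half of Theorem~\ref{thr2} applied to the circulant pair $(K_1',K_2')$ gives $\mathbf{v}_2\prec\mathbf{v}_1$. For sufficiency, assuming $\mathbf{v}_1\circledast\mathbf{x}=\mathbf{v}_2$ for some probability vector $\mathbf{x}\in\mathbb{R}^n_+$, the sufficiency half of Theorem~\ref{thr2} delivers $K_2'\subseteq K_1'$, and the reverse chain $K_2\subseteq K_2'\subseteq K_1'\subseteq K_1$ together with transitivity gives $K_2\subseteq K_1$.

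The main obstacle lies not in the corollary itself but in the supporting claim that every symmetric $n\times n$ DMC with $n\in\{3,4\}$ is equivalent to a circulant one. I would treat this as given, citing the remark in the preceding paragraph; a self-contained verification would enumerate the admissible row-permutation patterns compatible with the column-permutation constraint in Definition~\ref{defn6} and, in each case, exhibit an explicit row-and-column permutation sending the matrix into circulant form --- which is just a trivial instance of (\ref{def_ci}) with $\beta=1$ and permutation matrices, and therefore witnesses equivalence in both directions. Once that fact is accepted, the corollary above is pure book-keeping on top of Theorem~\ref{thr2}.
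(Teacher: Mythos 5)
Your proof is correct and follows essentially the same route as the paper: reduce the statement to Theorem~\ref{thr2} applied to the circulant equivalents $K_1'$ and $K_2'$. The only difference is one of emphasis --- the paper's appendix proof spends its effort enumerating the possible layouts of non-circulant symmetric DMCs for $n=3,4$ to justify that the circulant equivalents exist (leaving the transitivity chase implicit in the phrase ``the results in Theorem~\ref{thr2} can be applied''), whereas you take that existence as given from the preceding remark and instead spell out the transitivity of $\subseteq$ under composition of randomizations, which is a worthwhile detail to make explicit and is verified correctly (the composed weights $g_i h_j$ form a probability vector and the products $R_i R'_j$, $T'_j T_i$ are stochastic of the right dimensions).
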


\begin{proof}
See \ref{pr_crl1}.
\end{proof}

We finally make a few remarks about inclusion between the binary symmetric channel (BSC) with cross over probability $p \leq 1/2$ and the binary erasure channel (BEC) with erasure probability $\epsilon$. It is well-known that BSC($p$) is a degraded version of BEC($\epsilon$) if and only if $0 \leq \epsilon \leq 2 p$ \cite[ch. 5.6]{elgamalbook12}. It can further be shown that BSC($p$) $\subseteq$ BEC($\epsilon$) if and only if $0 \leq \epsilon \leq 2 p$, while BEC($\epsilon$) $\subseteq$ BSC($p$) if and only if $p=0$. The ``if'' part follows directly from the fact that degradation implies channel inclusion. The ``only if'' part can be justified by the fact that inclusion is absent between BEC($\epsilon$) and BSC($p$) if $\epsilon > 2 p$ or $p> 0$.

\section{Computational Aspects of Channel Inclusion} \label{comp}

In Section \ref{cond_ch_inc}, we have established analytical
conditions for determining if a DMC with structure includes another.
It is also of interest to know how this can be determined
numerically when there is no structure. Furthermore, once it has
been determined that $K_2 \subseteq K_1$, it is desirable for
$g_{(\alpha)}$ probabilities in (\ref{def_ci}) to contain as many zeros as
possible to get a concise representation.

In this section, we provide a linear programming approach to calculating Shannon deficiency, which also enables checking if inclusion holds. For the cases in which channel inclusion is known to hold, we prove that sparse solutions exist and discuss
how this sparse solution for $\mathbf{g}$ can be obtained through sparse recovery techniques, such as orthogonal matching pursuit.

We first take a look at determining if $K_2
\subseteq K_1$ through convex optimization. For $K_1$ of size $n_1
\times m_1$ and $K_2$ of size $n_2 \times m_2$, the problem can be
formulated as
\begin{equation} \label{opt_prob1}
\begin{split}
\textrm{minimize } & \left\| \sum_{\alpha=1}^{\beta} g_{(\alpha)}
R_{\alpha} K_1 T_{\alpha}- K_2 \right\|_1 \\
\textrm{subject to } & \sum_{\alpha=1}^{\beta} g_{(\alpha)}=1,
g_{(\alpha)} \geq 0
\end{split}
\end{equation}
with variables $\mathbf{g} \in \mathbb{R}^{\beta}$, where
$R_{\alpha}$ is $n_2 \times n_1$, and $T_{\alpha}$ is
$m_1 \times m_2$ stochastic matrices for $\alpha= 1,\dots,\beta$, and $K_2 \subseteq K_1$ is
determined if the optimal value is zero. As mentioned in Section
\ref{preli}, $R_{\alpha}$'s and $T_{\alpha}$'s can be equivalently
treated as pure channels, so there are at most $n_1^{n_2} m_2^{m_1}$
different $\{ R_{\alpha},T_{\alpha} \}$ pairs, and consequently
there are finitely many $g_{(\alpha)}$'s involved in the problem
(\ref{opt_prob1}). It is easy to see that
(\ref{opt_prob1}) is a convex optimization problem, and it can be
re-formulated as a linear programming problem with variables
$g_{(\alpha)}$ and an $n_2 \times 1$ vector ${\bf c}$
\begin{equation} \label{opt_prob2}
\begin{split}
& \textrm{minimize } 1^T {\bf c} \\
& \textrm{subject to } -{\bf c} \leq \left[
\sum_{\alpha=1}^{\beta} g_{(\alpha)} R_{\alpha} K_1 T_{\alpha}- K_2 \right]_{(:,j)} \leq {\bf c}, \textrm{ for } j=1,\dots,m_2, \\
& \sum_{\alpha=1}^{\beta} g_{(\alpha)}=1, g_{(\alpha)} \geq 0 .
\end{split}
\end{equation}
We also notice that the optimal value of (\ref{opt_prob2}) provides
a way to evaluate the Shannon deficiency of $K_1$ with respect to
$K_2$.

In the above analysis, the maximum number of $\{ R_{\alpha},T_{\alpha} \}$ pairs, given by $n_1^{n_2} m_2^{m_1}$ (or
$(n!)^2$ if both $K_1$ and $K_2$ are $n \times n$ doubly
stochastic), grows very rapidly with the sizes of $K_1$ and $K_2$.
With $K_2 \subseteq K_1$ already determined, it is natural to ask if
(\ref{def_ci}) can hold with some reduced number of $\{
R_{\alpha},T_{\alpha} \}$ pairs. In other words, we seek to have a
sparse solution of $\mathbf{g}$. We have the following theorem
regarding the sparsity of $\mathbf{g}$ given $K_2 \subseteq K_1$, based on Carath\'{e}odory's theorem \cite[p.155]{rockafellarbook70}.
\begin{thry} \label{thr4}
For two DMCs $K_1$ of size $n_1 \times m_1$ and $K_2$ of size $n_2
\times m_2$, if $K_2 \subseteq K_1$, there exist a probability
vector $\mathbf{g} \in \mathbb{R}^{\beta}_+$ and $\beta$ pairs
of stochastic matrices $\{ R_{\alpha},T_{\alpha}
\}_{\alpha=1}^{\beta}$ such that (\ref{def_ci}) holds with $\beta \leq n_2 (m_2-1)+1$. If both $K_1$ and $K_2$ are $n \times
n$ doubly stochastic, the number of necessary $\{
R_{\alpha},T_{\alpha} \}$ pairs in (\ref{def_ci}) can be improved as
$\beta \leq (n-1)^2+1$.
\end{thry}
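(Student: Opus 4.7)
The plan is to derive both bounds as consequences of Carath\'eodory's theorem: any point in the convex hull of a set $S$ lying in an affine space of dimension $d$ is a convex combination of at most $d+1$ elements of $S$.

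For the general bound, $K_2 \subseteq K_1$ places $K_2$ in the convex hull of the finite set $\mathcal{S}=\{RK_1T:R,T \text{ pure stochastic}\}$, via the equivalent pure-matrix formulation discussed in Section \ref{preli}. Each element of $\mathcal{S}$ is a row-stochastic matrix of size $n_2\times m_2$, so it satisfies $n_2$ linearly independent row-sum equalities; hence $\mathcal{S}$ lies in an affine subspace of dimension $n_2 m_2-n_2=n_2(m_2-1)$. Applying Carath\'eodory's theorem in this affine subspace immediately yields $\beta \leq n_2(m_2-1)+1$.

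For the doubly stochastic bound, I would refine the representation so that every constituent $R_\alpha K_1 T_\alpha$ is itself doubly stochastic. The natural candidate subset is $\mathcal{S}'=\{PK_1Q:P,Q \text{ permutation matrices}\}$; each $PK_1Q$ is doubly stochastic (permutations preserve double stochasticity) and hence lies in the Birkhoff polytope. The Birkhoff polytope of $n \times n$ doubly stochastic matrices has affine dimension $(n-1)^2$, since the $n$ row-sum plus $n$ column-sum constraints admit exactly one linear dependence (both force the total entry sum to equal $n$), leaving $n^2-(2n-1)=(n-1)^2$ free parameters. A second application of Carath\'eodory to $\mathcal{S}'$ then yields $\beta \leq (n-1)^2+1$.

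The main obstacle is the refinement step: showing that whenever $K_1,K_2$ are both doubly stochastic and $K_2\subseteq K_1$, the representation can indeed be restricted to permutation pairs, i.e., $K_2\in\mathrm{conv}(\mathcal{S}')$. Intuitively any non-permutation stochastic factor would break the column-sum constraint imposed on $K_2$, but a rigorous argument requires starting from a general stochastic representation $K_2=\sum_\alpha g_\alpha R_\alpha K_1 T_\alpha$ and exploiting the double stochasticity of $K_2$ to force the non-doubly-stochastic portions of the factors to cancel in aggregate (e.g., by Birkhoff-decomposing the doubly-stochastic components of the $R_\alpha,T_\alpha$ into permutations). Equivalently, one can analyze the linear program seeking $K_2=\sum_\alpha h_\alpha P_\alpha K_1 Q_\alpha$ over permutation pairs only; once its feasibility is established, the $2n-1$ row- and column-sum redundancies among the equality constraints guarantee a basic feasible solution of support at most $(n-1)^2+1$.
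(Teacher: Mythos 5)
Your first bound is correct and is exactly the paper's argument: the class of $n_2\times m_2$ row-stochastic matrices is a convex polytope of dimension $n_2(m_2-1)$, the matrices $R_\alpha K_1 T_\alpha$ and $K_2$ all lie in it, and Carath\'eodory's theorem gives $\beta\le n_2(m_2-1)+1$.

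For the doubly stochastic bound your architecture also matches the paper's (apply Carath\'eodory inside the $(n-1)^2$-dimensional Birkhoff polytope), but the step you yourself flag as ``the main obstacle''---that a doubly stochastic $K_2\subseteq K_1$ admits a representation $K_2=\sum_\alpha h_\alpha P_\alpha K_1 Q_\alpha$ over permutation pairs only---is left as a sketch, and it is genuinely the crux: without it the constituents $R_\alpha K_1 T_\alpha$ are only known to lie in the $n(n-1)$-dimensional polytope of stochastic matrices, and Carath\'eodory yields $n(n-1)+1$ rather than $(n-1)^2+1$. Your term-by-term intuition (a non-permutation factor ``breaks the column-sum constraint'') does not suffice, because a convex combination of non-doubly-stochastic stochastic matrices can perfectly well be doubly stochastic, e.g. $\frac12\left[\begin{smallmatrix}1&0\\1&0\end{smallmatrix}\right]+\frac12\left[\begin{smallmatrix}0&1\\0&1\end{smallmatrix}\right]=\frac12\left[\begin{smallmatrix}1&1\\1&1\end{smallmatrix}\right]$, so the violations can cancel in aggregate; and your two proposed repairs do not close this. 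Birkhoff-decomposing the $R_\alpha,T_\alpha$ is not available, since a pure $n\times n$ stochastic matrix that is not a permutation is not doubly stochastic and has no decomposition into permutations; and the permutation-pair linear program only delivers the $(n-1)^2+1$ support bound \emph{after} its feasibility is established, which is exactly the point in question. The paper supplies this missing step by the Birkhoff's-theorem argument in the proof of Theorem \ref{thr1} (Appendix C), which asserts that for doubly stochastic $K_1$ and $K_2$ the $R_\alpha$'s and $T_\alpha$'s may be restricted to permutation matrices; the appendix for Theorem \ref{thr4} itself only says ``a similar proof can follow,'' so to make your write-up complete you would need to reproduce (or strengthen) that restriction argument explicitly.
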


\begin{proof}
See \ref{pr_thr4}.
\end{proof}

It is well-known that a typical approach to recover a sparse signal
vector from its linear measurements is compressed sensing with
$\ell_1$ norm minimization (also known as basis pursuit). To
apply this approach to our problem, we can formulate it as
\begin{equation} \label{opt_prob3}
\begin{split}
\textrm{minimize } & \sum_{\alpha=1}^{\beta} |g_{(\alpha)}| \\
\textrm{subject to } & \sum_{\alpha=1}^{\beta} g_{(\alpha)}
R_{\alpha} K_1 T_{\alpha}= K_2
\end{split}
\end{equation}
with variables $\mathbf{g} \in \mathbb{R}^{\beta}$. It is easy to
prove that the optimal $\mathbf{g}$ always comes out non-negative
given $K_2 \subseteq K_1$. However, (\ref{opt_prob3}) does not
necessarily give a sparse solution for $\mathbf{g}$. As pointed out in \cite{cohen11} which addresses the solvability of a sparse probability vector based on linear measurements through $\ell_1$ norm minimization, in order for the sparse probability vector to be solvable, the number of independent measurements needs to be at least two times the sparsity level. In our case this is not satisfied, since the number of independent equations ($n_2 (m_2-1)$ or $(n-1)^2$) in the
constraints in (\ref{opt_prob3}) is usually less than $2 \beta$
(which can be up to $2n_2 (m_2-1)+2$ or $2(n-1)^2+2$). There are also other sparsity-inducing numerical methods such as
matching pursuit, which will be addressed in the next section.

\section{Channel Inclusion through OMP} \label{omp_mods}

Orthogonal matching pursuit (OMP) \cite{tropp07} and its variants
are widely investigated in the literature for sparse solutions of
linear equations. OMP algorithm gives a possibly sub-optimal
solution to the following problem with vector ${\bf g}$ being the
variable
\begin{equation} \label{opt_prob4a}
\begin{split}
& \textrm{minimize } Q({\bf g})= \| {\bf h}- A {\bf g} \|_2^2\\
& \textrm{subject to } \| {\bf g} \|_0 \leq s
\end{split}
\end{equation}
through which the known upper bound $s$ of sparsity level is
exploited. Notice that the standard OMP algorithm does not impose
the constraint ${\bf g} \geq {\bf 0}$. In the context of the channel
inclusion problem, $A$ is a $n_2 m_2 \times n_1^{n_2} m_2^{m_1}$
matrix with its $\alpha$-th column $[A]_{(:,\alpha)}= \mathrm{vec}
(R_{\alpha} K_1 T_{\alpha})$ (i.e. $[A]_{(:,\alpha)}$ is the
vectorized version of $R_{\alpha} K_1 T_{\alpha}$ by stacking its
columns in a vector), and ${\bf h}= \mathrm{vec} (K_2)$. Moreover,
we have the additional constraint ${\bf g} \geq {\bf 0}$ so that
(\ref{opt_prob4a}) becomes
\begin{equation} \label{opt_prob4}
\begin{split}
& \textrm{minimize } Q({\bf g})= \| {\bf h}- A {\bf g} \|_2^2\\
& \textrm{subject to } \| {\bf g} \|_0 \leq s, \hspace{2mm} {\bf g} \geq {\bf 0}
\end{split}
\end{equation}
where $s= n_2 (m_2-1)+ 1$. Note that if inclusion is present the
solution will automatically satisfy $\| {\bf g} \|_1= 1$, without adding this as an extra constraint. The problem in
(\ref{opt_prob4}) is related to (\ref{opt_prob1}) and
(\ref{opt_prob2}) in the sense that if the optimal value of
(\ref{opt_prob4}) is zero, the solution of (\ref{opt_prob4}) is also
the solution of (\ref{opt_prob1}) and (\ref{opt_prob2}).

To introduce briefly, OMP algorithm finds a sparse solution of
(\ref{opt_prob4a}) by selecting columns of $A$ having inner products
with the residue ${\bf h}- A {\bf g}$ with a large magnitude. This
requires taking the absolute value of the inner products in solving
(\ref{opt_prob4a}), followed by solving a least-square (LS) problem. However, to solve (\ref{opt_prob4}) we require
${\bf g}$ to have non-negative entries. We will modify the standard
OMP to encourage this result by not taking the absolute value of the
inner products, which is shown in Theorem \ref{thr2_alg2} to be a necessary condition for the LS solution to be non-negative in each entry.

In this section, assuming channel inclusion is present, we introduce OMP-like algorithms which solve for a sparse probability vector involved in channel inclusion. The established algorithm is also applicable to
other problems (e.g. solving for sparse probability vector based on moments of the discrete random variable \cite{cohen11}) with the objective of solving for non-negative vectors, and we will describe it in general terms. Unlike the
standard OMP algorithm which operates without positivity constraints
on the solution, the algorithms established here aim to find a
non-negative sparse solution of ${\bf g}$ based on ${\bf h}$ and $A$. For this purpose, modifications are needed in
our algorithms compared to the standard OMP algorithm which solves
(\ref{opt_prob4a}), in order to solve the problem in (\ref{opt_prob4}). For
example, standard OMP relies on choosing the inner product with the
largest absolute value, while our algorithms consider the signed inner
product; standard OMP makes one attempt per iteration for the
least-square solution, while it is possible for our algorithm to
make multiple attempts. This is because we insist that at each iteration the LS solution yields non-negative entries, which depends on the column chosen at the current iteration. If the LS solution provides some negative entries, instead of projecting the solution to the non-negative orthant, we start over and select a new column with a positive inner product. This preserves the orthogonality of the residue with all the selected columns. The details of our algorithm is given as follows.

\noindent\rule{\textwidth}{1pt}

\begin{alg} \label{alg1}
The modified OMP algorithm for retrieving non-negative sparse vector
${\bf g}$ from $A{\bf g}= {\bf h}$
with known upper bound of sparsity level $s$ consists of the
following

{\bf Inputs}:
\begin{itemize}
\begin{item}
An $p \times q$ matrix $A$ with $p \ll q$
\end{item}
\begin{item}
An $p \times 1$ vector ${\bf h}$ which consists of noise-free linear
measurements of ${\bf g}$
\end{item}
\begin{item}
The known upper bound of sparsity level $s$ of the non-negative
vector ${\bf g}$ (in general it is $p$; for the channel inclusion
problem, it is as specified in Theorem \ref{thr4})
\end{item}
\begin{item}
Tolerance $\epsilon$, for error being essentially zero
\end{item}
\end{itemize}

{\bf Outputs}:
\begin{itemize}
\begin{item}
A flag $f$ for a solution being found ($f=1$) or not found ($f=0$)
\end{item}
\begin{item}
The number $s_1$ of iterations for the residue to
become essentially zero (if $f=1$)
\end{item}
\begin{item}
A set (vector) $\Lambda_{s_1}$ of column indices for $A$, $\Lambda_{s_1}
\subseteq \{ 1, 2, \dots, q \}$ (if $f=1$)
\end{item}
\begin{item}
An $s_1 \times 1$ vector ${\bf g}_{s_1}$ (if $f=1$)
\end{item}
\end{itemize}

{\bf Procedure}:

Initialize the residue to ${\bf r}_0= {\bf h}$, the set of indices to
$\Lambda= 0_{1 \times s}$, the matrix containing the columns of $A$
which are selected to $A_{\rm sel}= 0_{p \times s}$, the inner product
vector to $P= 0_{1 \times q}$, and the iteration counter $t=1$. The
remaining steps are given in pseudo-code as follows:

01: while $t \leq s$ and $\| {\bf r}_{t-1} \|_{\infty} \geq
\epsilon$

02: \hspace{5mm} $P= {\bf r}_{t-1}^T A$; \hfill{} $\triangleright$
inner product generation

03: \hspace{5mm} ${\bf g}_t= -1_{t \times 1}$; \hfill{}
$\triangleright$ initializing the sparse vector

04: \hspace{5mm} while $\min \{ {\bf g}_t \}< 0$ and $\max \{ P \}>
0$

05: \hspace{10mm} $\lambda_t= \arg\max_j P_{(j)}$; $\Lambda_{(t)}=
\lambda_t$; \hfill{} $\triangleright$ locating the largest inner
product

06: \hspace{10mm} $[A_{\rm sel}]_{(:,t)}= [A]_{(:,\lambda_t)}$;
\hfill{} $\triangleright$ selecting a new column of $A$
corresponding to $\lambda_t$

07: \hspace{10mm} ${\bf g}_t= \arg\min_{\bf g} \| {\bf h}- [A_{\rm
sel}]_{(:,1:t)} {\bf g} \|_2^2$; \hfill{} $\triangleright$ solving a
least-square problem

08: \hspace{10mm} $P_{(\lambda_t)}= -1$; \hfill{} $\triangleright$
marking index $\lambda_t$ as attempted to avoid multiple attempts

09: \hspace{5mm} end;

10: \hspace{5mm} ${\bf r}_{t}= {\bf h}- [A_{\rm sel}]_{(:,1:t)} {\bf
g}_t$; $t= t+1$; \hfill{} $\triangleright$ updating residue for the
next iteration

11: end;

Finally, set $f=1$ if $\min \{ {\bf g}_{t-1} \} \geq 0$ and $\| {\bf
r}_{t-1} \|_{\infty}< \epsilon$, otherwise $f=0$. With $f=1$, the
other outputs are $s_1= t-1$, $\Lambda_{s_1}= \Lambda_{(1:s_1)}$,
and ${\bf g}_{s_1}$ is as given at the termination of the
iterations. The $j$-th entry of ${\bf g}_{s_1}$ is the $\lambda_j$-th entry of ${\bf g}$ and all other entries of ${\bf g}$ are zero.

\end{alg} \noindent\rule{\textwidth}{1pt}

Notice that the ``while'' loop starting Line $4$ in Algorithm
\ref{alg1} always terminates because there are always finitely many
positive inner products available for selection. Algorithm
\ref{alg1} inherits the keys steps directly from the standard OMP
algorithm, as seen from Lines $2$, $6$, $7$ and $10$. It differs
from the standard OMP algorithm in that it aims to find a
non-negative least-square solution at each iteration unless all the
positive inner products are depleted, which is reflected by Line
$4$. As seen from numerical simulations, it has a very low rate of
failure in the sense that it returns several $f=0$ out of a very
large number of tests in which channel inclusion is present. An
illustration of this is given in Figure \ref{alg1_fail}, which shows
the rate of failure of Algorithm \ref{alg1}, with $\beta= 1,2,3,4,5$
and randomly generated stochastic matrices $K_1$, $\{ R_{\alpha},
T_{\alpha} \}_{\alpha=1}^{\beta}$ as well as probability vector $\{
g_{\alpha} \}_{\alpha=1}^{\beta}$. Specifically, all matrix and vector entries are generated according to uniform distribution in $[0,1]$ and then normalized to satisfy probability constraint. We can also observe that the
rates of failure are very close for different values of $\beta$.

\begin{figure}[!ht]
\begin{minipage}{1.0\textwidth}
\begin{center}
\includegraphics[height=9cm,keepaspectratio]{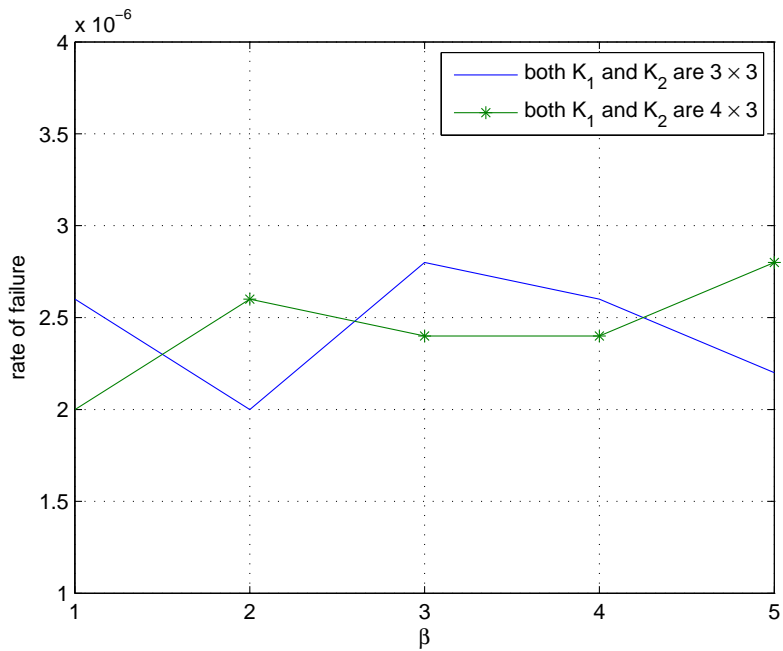}
\caption{Rate of failure of Algorithm \ref{alg1} with randomly
generated stochastic matrices $K_1$ and $K_2 \subseteq K_1$}
\label{alg1_fail}
\end{center}
\end{minipage}
\end{figure}

Failures occur if the algorithm produces a vector ${\bf g}_{s_1}$ that has negative entries. It is natural to ask why Algorithm \ref{alg1} produces failures. We rule out the selection of a positive inner product (as reflected in
Lines $4$ and $5$) from being the reason, as justified by the
following theorem.
\begin{thry} \label{thr2_alg2}
In Algorithm \ref{alg1}, the selection of a positive inner product
(as reflected in Lines $4$ and $5$) is necessary for the
least-square solution (in Line $7$) to be non-negative. Moreover, at each iteration, vector $P$ always has at least one positive entry, so that a (not yet selected) column of $A$ having a positive inner product with the residue is always possible.
\end{thry}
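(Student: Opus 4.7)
My approach is to split the statement into its two parts and treat them via a common pair of linear-algebraic identities: a Schur-complement expression for the newly introduced coefficient in the least-squares (LS) step, and the orthogonality of the LS residue to the already-selected columns.

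For the first part, I would derive an explicit formula for the $t$-th entry $g_t^{(2)}$ of ${\bf g}_t$, i.e.\ the coefficient attached to the column $a\triangleq [A]_{(:,\lambda_t)}$ chosen in Line $5$. Writing $A_{t-1}\triangleq [A_{\rm sel}]_{(:,1:t-1)}$ and letting $\Pi_{t-1}$ be the orthogonal projector onto $\mathrm{col}(A_{t-1})$, block-inverting the normal equations for $[A_{t-1},a]$ via the Schur complement yields
\begin{equation}
g_t^{(2)}=\frac{a^{T}(I-\Pi_{t-1}){\bf h}}{a^{T}(I-\Pi_{t-1})a}=\frac{a^{T}{\bf r}_{t-1}}{\|(I-\Pi_{t-1})a\|_2^{2}}=\frac{P_{(\lambda_t)}}{\|(I-\Pi_{t-1})a\|_2^{2}},
\end{equation}
where the simplifications use $(I-\Pi_{t-1}){\bf h}={\bf r}_{t-1}$ together with the idempotence and symmetry of $\Pi_{t-1}$. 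Provided $a\notin\mathrm{col}(A_{t-1})$ (otherwise the selection is redundant and can be excluded without loss of generality), the denominator is strictly positive, so $g_t^{(2)}$ inherits the sign of $P_{(\lambda_t)}$. In particular $P_{(\lambda_t)}\le 0$ forces $g_t^{(2)}\le 0$, precluding strict non-negativity of the LS vector and yielding the necessity claim.

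For the second part, I would exploit the hypothesis $K_2\subseteq K_1$, which by Definition \ref{defn1} supplies a probability vector ${\bf g}^{\ast}\ge{\bf 0}$ satisfying $A{\bf g}^{\ast}={\bf h}$. At any iteration $t$ with ${\bf r}_{t-1}\neq{\bf 0}$, the LS property of ${\bf g}_{t-1}$ from the preceding iteration gives $A_{t-1}^{T}{\bf r}_{t-1}={\bf 0}$. Computing $\langle{\bf r}_{t-1},{\bf h}\rangle$ two ways then gives
\begin{equation}
\|{\bf r}_{t-1}\|_2^{2}=\langle{\bf r}_{t-1},{\bf r}_{t-1}+A_{t-1}{\bf g}_{t-1}\rangle=\langle{\bf r}_{t-1},A{\bf g}^{\ast}\rangle=\sum_{j} g^{\ast}_{(j)}P_{(j)}.
\end{equation}
The left-hand side is strictly positive and $g^{\ast}_{(j)}\ge 0$ for every $j$, so some index $j$ must satisfy $P_{(j)}>0$; moreover, any column already sitting in $A_{t-1}$ contributes $P_{(j)}=0$ by the orthogonality above, so the positive entry necessarily corresponds to a column of $A$ not yet selected, as claimed.

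The main obstacle is bookkeeping rather than any deep estimate: one must verify that the Schur-complement identity applies (the generic assumption $a\notin\mathrm{col}(A_{t-1})$) and that the orthogonality invariant $A_{t-1}^{T}{\bf r}_{t-1}={\bf 0}$ is preserved across the iterations that complete Line $7$. Both invariants are automatic in any run of Algorithm \ref{alg1} that has not already failed, so under the inclusion hypothesis the two displayed identities close the argument.
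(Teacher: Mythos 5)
Your proposal is correct and follows essentially the same two-step structure as the paper's proof: showing that the sign of the newly introduced LS coefficient equals the sign of the selected inner product (the paper derives this by pairing the residue recursion ${\bf h}={\bf r}_{t-1}+A_{t-1}{\bf g}_{t-1}={\bf r}_t+A_t{\bf g}_t$ with ${\bf r}_{t-1}$ rather than via the Schur complement, but the identity is the same), and then using $A{\bf g}^{\ast}={\bf h}$ with ${\bf g}^{\ast}\ge{\bf 0}$ plus orthogonality of the residue to the selected columns to force a positive entry of $P$ at an unselected column (the paper phrases this geometrically via convex separation, you phrase it as the identity $\|{\bf r}_{t-1}\|_2^2=\sum_j g^{\ast}_{(j)}P_{(j)}$, which is an equivalent and if anything cleaner rendering).
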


\begin{proof}
See \ref{pr_thr2_alg2}.
\end{proof}

Theorem \ref{thr2_alg2} implies that no mistake is made by not
considering the negative inner products. Thus we believe that the
failures produced by Algorithm \ref{alg1} are due to the fact that
not all the possible selections of inner products are attempted.
Going one step further from Algorithm \ref{alg1}, it is desirable to establish an improved algorithm which is always
successful. We now describe the algorithm which can be proved based on a forthcoming
conjecture to be always successful in solving for sparse
probability vector involved in channel inclusion, provided that
inclusion is present.

\noindent\rule{\textwidth}{1pt}

\begin{alg} \label{alg2}
The modified OMP algorithm for retrieving non-negative sparse vector
${\bf g}$ from $A{\bf g}= {\bf h}$ with known upper bound of sparsity level $s$ consists of the following

{\bf Inputs}:
\begin{itemize}
\begin{item}
An $p \times q$ matrix $A$ with $p \ll q$
\end{item}
\begin{item}
An $p \times 1$ vector ${\bf h}$ which consists of noise-free linear
measurements of ${\bf g}$
\end{item}
\begin{item}
The known upper bound of sparsity level $s$ of the non-negative
vector ${\bf g}$ (in general it is $p$; for the channel inclusion
problem, it is as specified in Theorem \ref{thr4})
\end{item}
\begin{item}
Tolerance $\epsilon$, for error being essentially zero
\end{item}
\end{itemize}

{\bf Outputs}:
\begin{itemize}
\begin{item}
A flag $f$ for a solution being found ($f=1$) or not found ($f=0$)
\end{item}
\begin{item}
The number $s_1$ of iterations for the residue to
become essentially zero (if $f=1$)
\end{item}
\begin{item}
A set (vector) $\Lambda_{s_1}$ of column indices for $A$, $\Lambda_{s_1}
\subseteq \{ 1, 2, \dots, q \}$ (if $f=1$)
\end{item}
\begin{item}
An $s_1 \times 1$ vector ${\bf g}_{s_1}$ (if $f=1$)
\end{item}
\end{itemize}

{\bf Procedure}:

Initialize the residue to ${\bf r}_0= {\bf h}$, the set of indices to
$\Lambda= 0_{1 \times s}$, the matrix containing the columns of $A$
which are selected to $A_{\rm sel}= 0_{p \times s}$, the inner product
matrix to $P= 0_{s \times q}$, and the iteration counter $t=1$. For observation purpose we also count the actual number of iterations $t_{\rm act}$, which is initialized as zero. The remaining steps are given in pseudo-code as follows:

01: while $1 \leq t \leq s$ and $\| {\bf r}_{t-1} \|_{\infty} \geq
\epsilon$

02: \hspace{5mm} if $\max \{ [P]_{(t,:)} \} \leq 0$ and $\min \{
[P]_{(t,:)} \} < 0$

03: \hspace{10mm} $[P]_{(t,:)}= 0_{1 \times q}$; $t=t-1$; \hfill{}
$\triangleright$ resetting inner product and tracing back

04: \hspace{5mm} else

05: \hspace{10mm} if $[P]_{(t,:)}== 0_{1 \times q}$

06: \hspace{15mm} $[P]_{(t,:)}= {\bf r}_{t-1}^T A$; \hfill{}
$\triangleright$ inner product generation

07: \hspace{10mm} end;

08: \hspace{10mm} ${\bf g}_t= -1_{t \times 1}$; \hfill{}
$\triangleright$ initializing the sparse vector

09: \hspace{10mm} while $\min \{ {\bf g}_t \}< 0$ and $\max \{
[P]_{(t,:)} \}> 0$

10: \hspace{15mm} $\lambda_t= \arg\max_j [P]_{(t,j)}$; $\Lambda_{(t)}=
\lambda_t$; \hfill{} $\triangleright$ locating the largest inner
product

11: \hspace{15mm} $[A_{\rm sel}]_{(:,t)}= [A]_{(:,\lambda_t)}$;
\hfill{} $\triangleright$ selecting a new column of $A$
corresponding to $\lambda_t$

12: \hspace{15mm} ${\bf g}_t= \arg\min_{\bf g} \| {\bf h}- [A_{\rm
sel}]_{(:,1:t)} {\bf g} \|_2^2$; \hfill{} $\triangleright$ solving a
least-square problem

13: \hspace{15mm} $[P]_{(t,\lambda_t)}= -1$; \hfill{}
$\triangleright$ marking index $\lambda_t$ as attempted to avoid
multiple attempts

14: \hspace{10mm} end;

15: \hspace{10mm} if $\min \{ {\bf g}_t \}< 0$

16: \hspace{15mm} $[P]_{(t,:)}= 0_{1 \times q}$; $t=t-1$; \hfill{}
$\triangleright$ resetting inner product and tracing back

17: \hspace{10mm} else

18: \hspace{15mm} ${\bf r}_{t}= {\bf h}- [A_{\rm sel}]_{(:,1:t)}
{\bf g}_t$; $t= t+1$; \hfill{} $\triangleright$ updating residue for
the next iteration

19: \hspace{10mm} end;

20: \hspace{5mm} end;

21: \hspace{5mm} $t_{\rm act}= t_{\rm act}+ 1$;

22: end;

Finally, set $f=1$ if $t \geq 1$ and $\| {\bf r}_{t-1} \|_{\infty}<
\epsilon$, otherwise $f=0$. With $f=1$, the other outputs are $s_1=
t-1$, $\Lambda_{s_1}= \Lambda_{(1:s_1)}$, and ${\bf g}_{s_1}$ is as
given at the termination of the iterations. The $j$-th entry of ${\bf g}_{s_1}$ is the $\lambda_j$-th entry of ${\bf g}$ and all other entries of ${\bf g}$ are zero.

\end{alg} \noindent\rule{\textwidth}{1pt}

Algorithm \ref{alg2} differs from Algorithm \ref{alg1} primarily in
the following two aspects: the inner product is changed from a
vector into a matrix, as reflected in Line $6$, for the purpose of
recalling the values of inner products involved in the past
iterations. Moreover, the iteration may go backward, as reflected by Lines $3$ and
$16$, in the sense that the most recently added columns of $A_{\rm sel}$ may be deleted in order to ``backtrack''. In Algorithm \ref{alg2}, the iteration proceeds at $t$ when a new column of $A$ can be found, such that with $[A_{\rm
sel}]_{(:,t)}$ updated as this new column, it follows that ${\bf
g}_t= \arg\min_{\bf g} \| {\bf h}- [A_{\rm sel}]_{(:,1:t)} {\bf g}
\|_2^2 \geq {\bf 0}$, i.e. the least-square solution of the sparse vector
is non-negative in each entry; otherwise, the iteration traces back and updates
the selection of $[A_{\rm sel}]_{(:,t-1)}$, for the purpose of
making it possible to find $[A_{\rm sel}]_{(:,t)}$ such that ${\bf
g}_t= \arg\min_{\bf g} \| {\bf h}- [A_{\rm sel}]_{(:,1:t)} {\bf g}
\|_2^2 \geq {\bf 0}$. When the iteration proceeds, the residue is updated
for inner product generation in the next iteration; when the
iteration traces back, the inner product is reset, in order to
enable its re-generation when the iteration proceeds to this step a
second time.

We now introduce the following conjecture which will lead to the
effectiveness (to be proved in Theorem \ref{thr1_alg2}) of Algorithm \ref{alg2}.
\begin{conj} \label{prop1}
Let $G$ be a matrix with all entries being non-negative and all
columns being linearly independent. There exists at least one column
${\bf g}_*$ of $G$ such that, with $G_*$ obtained by excluding ${\bf
g}_*$ from $G$, $\hat{\bf x}:= \arg\min_{\bf x} \| {\bf g}_*- G_*
{\bf x} \|_2^2$ has non-negative entries.
\end{conj}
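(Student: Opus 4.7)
The plan is to recast the conjecture as a cleaner linear-algebraic statement about the Gram matrix $M := G^T G$. Because the columns of $G$ are linearly independent, $M$ is symmetric positive definite; because $G$ has non-negative entries, every entry of $M$ is non-negative. A Schur-complement calculation shows that the least-squares solution $\hat{\bf x}$ obtained by omitting the $i$-th column of $G$ has entries $\hat{x}_{(j)} = -[M^{-1}]_{(j,i)} / [M^{-1}]_{(i,i)}$ for each $j \neq i$, with $[M^{-1}]_{(i,i)} > 0$. Hence the conjecture is equivalent to the following: for every symmetric positive definite matrix $M$ with non-negative entries, at least one column of $M^{-1}$ has only non-positive off-diagonal entries.

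I would first verify this reformulated claim directly in low dimensions to build intuition. For $q=2$ it is immediate, since $M^{-1}$ has a single off-diagonal entry $-[M]_{(1,2)}/\det(M) \leq 0$. For $q=3$, Cramer's rule turns ``column $i$ works'' into two inequalities of the form $[M]_{(i,j)}[M]_{(k,k)} \geq [M]_{(i,k)}[M]_{(j,k)}$; if two of these failed simultaneously, multiplying them and canceling common positive factors would yield $[M]_{(i,i)}[M]_{(j,j)} < [M]_{(i,j)}^2$, violating the positivity of a $2\times 2$ principal minor of $M$. Thus at most one column among three can fail. This already hints that the general proof should rest on determinantal inequalities among the signed minors $(-1)^{i+j}\det M_{(ij)}$.

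For general $q$, I would pursue two complementary strategies. The first is induction on $q$: remove a column from $G$, apply the inductive hypothesis to the remaining Gram matrix to obtain a working index $k$, and then show that reinstating the removed column either preserves non-negativity at $k$ or exposes a new working index. The second is a finite exchange/descent argument: starting from an arbitrary candidate index $i$, if the LS coefficients contain a negative entry at position $j$, swap $i$ for $j$, and construct a potential function---a ratio of principal minors of $M$, or the weighted residual $\|g_i - P_{V_i}(g_i)\|_2^2$---that is strictly monotone along swaps and thereby forces termination at a working column.

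The main obstacle in either strategy is producing the right determinantal identity or monotone potential that rules out simultaneous failure of every column. The $q=3$ argument leaned only on positivity of $2\times 2$ principal minors; the analogue for larger $q$ appears to require a Sylvester- or Fischer-Hadamard-type inequality relating $(q-1)\times(q-1)$ principal and near-principal minors of $M$, combined with the non-negativity of the entries of $M$. Pinning down this inequality in a form tight enough to exclude every all-columns-fail configuration---and doing so without generically-positive assumptions on $M$ that the conjecture does not grant---is where the real work lies.
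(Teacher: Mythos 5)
The first thing to say is that the paper does not prove this statement: it is posed and used as a \emph{conjecture} (Theorem \ref{thr1_alg2} is explicitly conditional on it), so there is no in-paper proof to compare yours against, and your submission is not a proof either --- you say as much in your final paragraph. On its own terms, your reduction is largely sound. The Gram-matrix reformulation is correct: with $M=G^TG$, the normal equations give $\hat{x}_{(j)}=-[M^{-1}]_{(j,i)}/[M^{-1}]_{(i,i)}$ with $[M^{-1}]_{(i,i)}>0$, and the $q=2$ and $q=3$ verifications go through, with two small caveats: the cancellation step in the $q=3$ case tacitly assumes the common factor $[M]_{(i,j)}[M]_{(i,k)}$ is strictly positive (the degenerate case $[M]_{(i,j)}[M]_{(i,k)}=0$ needs a separate one-line argument), and the correct conclusion is ``at least one column works,'' not ``at most one column can fail'' --- if exactly one of the three cross inequalities is violated, two of the three columns fail simultaneously.

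There are two genuine gaps. First, your claimed equivalence is only an implication. A Gram matrix $G^TG$ with $G$ entrywise non-negative is a \emph{completely positive} matrix, and for $q\geq 5$ the completely positive matrices form a strict subset of the doubly non-negative matrices (symmetric positive definite with non-negative entries). Your reformulated claim quantifies over the larger class, so it is strictly stronger than the conjecture: proving it would suffice, but a counterexample to it would not refute the conjecture, and the extra structure of complete positivity may be precisely what any eventual proof has to exploit. Second, and more importantly, the decisive step is missing by your own admission: neither the inductive step (reinstating the removed column can destroy non-negativity at the index $k$ supplied by the inductive hypothesis, and you give no mechanism for producing the ``new working index'') nor the exchange argument (no monotone potential is exhibited, so cycling is not excluded) is carried out, and the Fischer--Hadamard-type inequality you anticipate is not identified. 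As it stands this is a reasonable plan of attack on an open problem, not a proof; it neither closes the conjecture nor contradicts anything the paper establishes, since the paper's Lemmas \ref{prop1_ext1} and \ref{prop1_ext2} and Theorem \ref{thr1_alg2} all \emph{assume} Conjecture \ref{prop1} rather than prove it.
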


Conjecture \ref{prop1} points out that among several linearly
independent non-negative vectors, there is at least one of them,
whose orthogonal projection onto the hyperplane defined by the other
vectors is a conic combination of those vectors. In the following, we show the effectiveness of Algorithm \ref{alg2},
as stated in Theorem \ref{thr1_alg2}.

\begin{thry} \label{thr1_alg2}
If Conjecture \ref{prop1} holds, then Algorithm \ref{alg2} does not fail, i.e. $f=1$ is returned when inclusion is present.
\end{thry}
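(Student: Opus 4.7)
The plan is to show, under Conjecture \ref{prop1}, that Algorithm \ref{alg2} never enters a traceback branch when $K_2 \subseteq K_1$, so that the iteration counter $t$ increases monotonically and the residue is driven below $\epsilon$ within $s$ outer iterations, returning $f = 1$.

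First, I would record that $K_2 \subseteq K_1$ implies the existence of a nonnegative $\mathbf{g}^* \geq \mathbf{0}$ with $\mathbf{h} = A\mathbf{g}^*$ and $\|\mathbf{g}^*\|_0 \leq s$, by the definition of channel inclusion and Theorem \ref{thr4}. Theorem \ref{thr2_alg2} already guarantees that at each iteration the vector $[P]_{(t,:)}$ contains at least one strictly positive entry, so the inner while-loop (Lines 9--14) always has candidates to try.

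The central step is the following invariant: at each outer iteration $t$ with $\mathbf{r}_{t-1} \neq \mathbf{0}$ and nonnegative LS solution $\mathbf{g}_{t-1} \geq \mathbf{0}$ on the current selection $[A_{\rm sel}]_{(:,1:t-1)}$, there exists a column $a_j$ of $A$ with $a_j^T \mathbf{r}_{t-1} > 0$ such that the augmented LS solution $\mathbf{g}_t$ over $[A_{\rm sel}]_{(:,1:t-1)}$ appended by $a_j$ is again nonnegative. I would prove this invariant by applying Conjecture \ref{prop1} to a matrix $G$ assembled from the columns of $[A_{\rm sel}]_{(:,1:t-1)}$ together with the columns of $A$ in the support of some valid nonnegative representation of $\mathbf{h}$, chosen minimally relative to $A_{\rm sel}$. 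The conjecture would produce a column of $G$ whose LS projection onto the span of the remaining columns has nonnegative coefficients; exploiting the orthogonality $[A_{\rm sel}]^T \mathbf{r}_{t-1} = \mathbf{0}$ and the identity $A_{\rm sel}\mathbf{g}_{t-1} + \mathbf{r}_{t-1} = \mathbf{h}$, this column can be shown to lie outside $[A_{\rm sel}]$ and to simultaneously satisfy both the positive-inner-product condition and the nonnegative-LS condition needed by the algorithm. The algorithm's greedy rule then locates this column (or an equivalent one) during the inner loop before exhausting all positive inner products.

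Once the invariant is in hand, neither traceback branch (Lines 3 and 16) can fire, so $t$ is monotonically nondecreasing. By the standard OMP residue property each new orthogonal-projection update strictly reduces $\|\mathbf{r}_t\|_2$, and since the columns selected are linearly independent and a nonnegative representation with at most $s$ atoms exists, the residue must reach $\mathbf{0}$ within at most $s$ iterations, giving $f = 1$ at outer-loop exit. The main obstacle will be establishing the invariant cleanly: one must translate Conjecture \ref{prop1}'s abstract statement about LS projections among the columns of $G$ into the algorithmic conclusion that appending a single column to the LS problem involving $\mathbf{h}$ preserves nonnegativity. This requires a careful choice of $G$ and a bookkeeping argument that rules out the case in which the column identified by Conjecture \ref{prop1} lies inside $[A_{\rm sel}]$ rather than outside it, likely by iteratively reducing to a smaller minimal nonnegative representation until a column outside $[A_{\rm sel}]$ is forced.
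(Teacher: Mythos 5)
There is a genuine gap, and it lies in the central invariant you propose. You claim that for \emph{every} reachable selection $[A_{\rm sel}]_{(:,1:t-1)}$ with a nonnegative LS solution there exists a not-yet-selected column whose addition again yields a nonnegative LS solution, and you conclude that neither traceback branch ever fires. But Conjecture \ref{prop1} (even after the Farkas-lemma upgrade the paper carries out in Lemma \ref{prop1_ext1} and the iterated version in Lemma \ref{prop1_ext2}) only yields the existence of \emph{one} ordering of \emph{one} true support set $A_{s_1}$ such that every prefix $[A_{s_1}]_{(:,1:k)}$ gives a nonnegative LS solution. It says nothing about selections that contain columns outside that support, or support columns taken in a different order --- and the greedy rule in Lines $9$--$12$ picks the candidate with the \emph{largest} positive inner product that happens to give a nonnegative LS solution, which need not be the column on the good path. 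Your proposed fix (apply the conjecture to $G$ built from $A_{\rm sel}$ together with a minimal support and ``rule out'' that the removable column lies inside $A_{\rm sel}$) does not go through: the conjecture guarantees only \emph{at least one} removable column, and that column may well be one of the true support columns, leaving the off-support columns of $A_{\rm sel}$ stuck in place. This is precisely why Algorithm \ref{alg2} is equipped with backtracking at all; a no-backtracking proof would make Lines $3$ and $16$ superfluous, and the paper itself reports that backtracks do occur (just rarely).

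The paper's proof takes a different and weaker-but-sufficient route that you would need to adopt. After establishing the good ordering $A_{s_1}$ via Lemmas \ref{prop1_ext1}--\ref{prop1_ext2}, it does \emph{not} argue that the algorithm stays on that path. Instead it observes that a failure ($f=0$) requires an order-$0$ generalized failure, which in turn requires that at every level every column with a positive inner product be attempted; by Theorem \ref{thr2_alg2}, the good column $[A_{s_1}]_{(:,k+1)}$ has a positive inner product with the residue of the good prefix, so either it is eventually attempted (and then selected, since its LS solution is nonnegative, advancing the induction to $P_{k+1}$) or it is never attempted (set $N_{k+1}$), in which case no generalized failure occurred at that level and the run must have succeeded by some other branch of the search. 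The recursive decomposition $P_0=\cup_{k=1}^{s_1}N_k\cup P_{s_1}$ then shows every possible behavior terminates with $f=1$. In short: the theorem is proved by showing the backtracking search cannot exhaust all options without hitting a success, not by showing it never backtracks. A secondary issue with your writeup is the termination bound: without the good-path argument, nothing guarantees the residue vanishes within $s$ outer iterations, which the loop condition $t\le s$ requires.
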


\begin{proof}
See \ref{pr_thr1_alg2}.
\end{proof}

For Algorithm \ref{alg2} to fail, $A_{\rm sel}$ must have no column of $A$, and all the columns of $A$ have been attempted but none of them is selected eventually. These possible multiple attempts all occur at $t=1$, when $A_{\rm sel}$ has no column of $A$. Theorem \ref{thr1_alg2} effectively rules out this possibility, and implies that Algorithm \ref{alg2} is
guaranteed to work by searching for a non-negative least-square
solution at each iteration, in the sense that there exists a path of
iterations, in which an atom (a column of $A$) associated with a positive inner
product is selected at each iteration, eventually leading to a
solution with all entries of ${\bf g}_{s_1}$ being non-negative. Essentially, Theorem \ref{thr1_alg2} implies that by only focusing on the selection of a new column which results in a non-negative intermediate solution ${\bf g}_t$ (as reflected in Lines $9$ and $15$ of Algorithm \ref{alg2}), we do not have the risk of driving Algorithm \ref{alg2} into failure. If Algorithm \ref{alg1} or \ref{alg2} terminates with $f=1$, the residue can be treated as zero. From this, it can be shown that ${\bf g}_{s_1}$ is a probability vector: consider the product $1_{1 \times n_2 m_2} ([A_{\rm sel}]_{(:,1:s_1)} {\bf g}_{s_1}-{\bf h})= 0$, we have $n_2 1_{1 \times s_1} {\bf g}_{s_1}= n_2$, which shows that the entries of ${\bf g}_{s_1}$ sum up to $1$, i.e. a sparse {\it probability vector} relating $K_1$ and $K_2$ is obtained.

By performing the same numerical tests (i.e. for both $K_1$ and $K_2$ being $3 \times 3$ or $4 \times 3$, with randomly generated stochastic matrices $K_1$, $\{ R_{\alpha}, T_{\alpha} \}_{\alpha=1}^{\beta}$ as well as probability vector $\{ g_{\alpha} \}_{\alpha=1}^{\beta}$) as performed on Algorithm \ref{alg1}, it is observed that Algorithm \ref{alg2} produces no failure in $5 \times 10^6$ tests for each case. It is also seen that Algorithm \ref{alg2} does not invoke many backtracks in practice if inclusion is present, which is as expected given the fact that Algorithm \ref{alg1} has a very low rate of failure.

Furthermore, starting from two given DMCs $K_1$ and $K_2$ without knowing the presence or absence of inclusion, for the purpose of determining if inclusion is present, the $\ell_1$ minimization approach given by (\ref{opt_prob2}) should be used since it provides guaranteed correctness about the presence or absence of inclusion. Once the presence of inclusion is identified, for the purpose of obtaining a sparse probability vector relating $K_1$ and $K_2$, Algorithm \ref{alg1} can be used first, and if Algorithm \ref{alg1} does not return a sparse probability vector as desired, Algorithm \ref{alg2} becomes the choice for this purpose. Although we do not have a proof that Algorithm \ref{alg2} does not incur a lot of backtracking, we known empirically that it is the case, and thus Algorithm \ref{alg2} is favorable in the sense that it makes a more effective and less complex approach for obtaining a sparse solution than $\ell_1$ minimization approach.

\section{Conclusions} \label{concl}

In this paper, we investigate the characterization of channel
inclusion between DMCs through analytical and numerical approaches.
We have established several conditions for equivalence between DMCs,
and for inclusion between DMCs with structure including doubly
stochastic, circulant, and symmetric DMCs. We formulate a linear
programming problem leading to the quantitative result on how far is
one DMC apart from including another, which has an implication on
the comparison of their error rate performance. In addition, for the
case in which one DMC includes another, by using Carath\'{e}odory's theorem, we derive an upper bound
for the necessary number of pairs of pure channels involved in the
representation of the worse DMC in terms of the better one, which is
significantly less than the maximum possible number of such pairs.
This kind of sparsity implies reduced complexity of finding the
optimal code for the better DMC based on the code for the worse one.
By modifying the standard OMP algorithm, an iterative algorithm that
exploits this sparsity is established, which is seen to be
significantly less complex than basis pursuit and produces no failure
in determining the presence or absence of channel inclusion. Such effectiveness in determining the presence or absence of channel inclusion is proved with the help of a conjecture.

\renewcommand\thesection{Appendix \Alph{section}}
\setcounter{section}{0}

\section{Proof of Theorem \ref{th_kron}} \label{pr_kron}

Given (\ref{def_ci}), it follows that
\begin{equation} \label{ci_kron}
\left( \sum_{\alpha=1}^{\beta} g_{(\alpha)} R_{\alpha} K_1
T_{\alpha} \right)^{\otimes N} = K_2^{\otimes N}
\end{equation}
Based on the bilinearity of Kronecker product, the left hand side of
(\ref{ci_kron}) can be expanded into the summation of $\beta^N$
terms, which are all in the form of $\Sigma_{j(i): \{ 1,\dots, N \}
\rightarrow \{ 1,\dots, \beta \}} (\prod_{i=1}^N g_{j(i)})
[(R_{j(1)} K_1 T_{j(1)}) \otimes \cdots \otimes (R_{j(N)} K_1
T_{j(N)})]$ where the summation is over all possible functions
$j(i)$: $\{ 1,\dots, N \}$ $\rightarrow$ $\{ 1,\dots, \beta \}$.
Based on the mixed-product property of Kronecker product, we have
\begin{equation} \label{mprod_kron_ci}
\begin{split}
(R_{j(1)} K_1 T_{j(1)}) \otimes (R_{j(2)} K_1 T_{j(2)}) &= (R_{j(1)}
K_1) \otimes (R_{j(2)} K_1) (T_{j(1)} \otimes T_{j(2)})\\
&= (R_{j(1)} \otimes R_{j(2)}) K_1^{\otimes 2} (T_{j(1)} \otimes
T_{j(2)})
\end{split}
\end{equation}
By applying (\ref{mprod_kron_ci}) repeatedly, it follows that
$(\prod_{i=1}^N g_{j(i)}) [(R_{j(1)} K_1 T_{j(1)}) \otimes \cdots
\otimes (R_{j(N)} K_1 T_{j(N)})]= (\prod_{i=1}^N g_{j(i)}) (R_{j(1)}
\otimes \cdots \otimes R_{j(N)}) K_1^{\otimes N} (T_{j(1)} \otimes
\cdots \otimes T_{j(N)})$, which in turn implies that the left hand
side of (\ref{ci_kron}) expands into $\beta^N$ terms in the form of
$(\prod_{i=1}^N g_{j(i)}) (R_{j(1)} \otimes \cdots \otimes R_{j(N)})
K_1^{\otimes N} (T_{j(1)} \otimes \cdots \otimes T_{j(N)})$, and
thus $K_2^{\otimes N} \subseteq K_1^{\otimes N}$ by Definition
\ref{defn1}.

\section{Proof of Theorem \ref{thr_equiv}} \label{pr_thr_equiv}

Consider $K_1$ of size $n_1 \times m_1$ and $K_2$ of size $n_2
\times m_2$. By Definition \ref{defn1} we have
\begin{subequations}
\begin{equation} \label{k1_equiv}
K_1= \sum_{\alpha_1=1}^{\beta_1} g_{1(\alpha_1)} R_{1,\alpha_1} K_2
T_{1,\alpha_1}
\end{equation}
\begin{equation} \label{k2_equiv}
K_2= \sum_{\alpha_2=1}^{\beta_2} g_{2(\alpha_2)} R_{2,\alpha_2} K_1
T_{2,\alpha_2}
\end{equation}
\end{subequations}
with $R_{1,\alpha_1}$'s of size $n_1 \times n_2$, $T_{1,\alpha_1}$'s
of size $m_2 \times m_1$,  $R_{2,\alpha_2}$'s of size $n_2 \times
n_1$,  $T_{2,\alpha_2}$'s of size $m_1 \times m_2$, all of which are
``pure'' DMCs. By plugging (\ref{k2_equiv}) into (\ref{k1_equiv}),
it can be seen that
\begin{equation} \label{k1k1lr}
K_1= \sum_{\alpha_1=1, \alpha_2=1}^{\beta_1, \beta_2}
g_{1(\alpha_1)} g_{2(\alpha_2)} R_{1,\alpha_1} R_{2,\alpha_2} K_1
T_{2,\alpha_2} T_{1,\alpha_1}
\end{equation}
is expressed as a convex combination involving the terms
$R_{1,\alpha_1} R_{2,\alpha_2} K_1 T_{2,\alpha_2} T_{1,\alpha_1}$. We first establish the following lemma as an intermediate step.

\begin{lem}
There should be only one term of the form $R_{1,\alpha_1} R_{2,\alpha_2} K_1 T_{2,\alpha_2} T_{1,\alpha_1}$ in the right hand side of (\ref{k1k1lr}), i.e. $\beta_1= \beta_2= 1$, with full-rank $R_{1,\alpha_1} R_{2,\alpha_2}$ and $T_{2,\alpha_2} T_{1,\alpha_1}$.
\end{lem}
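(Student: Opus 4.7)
The plan is to aggregate (\ref{k1k1lr}) as $K_1 = \sum_k \gamma_k A_k K_1 B_k$, with $k$ indexing the pair $(\alpha_1,\alpha_2)$, $\gamma_k = g_{1(\alpha_1)} g_{2(\alpha_2)}$, $A_k := R_{1,\alpha_1} R_{2,\alpha_2}$ (a pure $n_1\times n_1$ stochastic matrix), and $B_k := T_{2,\alpha_2} T_{1,\alpha_1}$ (a pure $m_1 \times m_1$ stochastic matrix). I will then prove in sequence: (i) every $A_k$ and $B_k$ is a permutation matrix (which in particular forces $n_1=n_2$, $m_1=m_2$ and the full-rank claim); (ii) every product $A_k K_1 B_k$ equals $K_1$; (iii) all four index families collapse so that $\beta_1 = \beta_2 = 1$.

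For step (i), I will use a capacity-preservation argument. Let $p^*$ be a capacity-achieving input for $K_1$; by {\bf AS1} it has strictly positive entries. Since $I(p;K)$ is convex in the channel for fixed input, applying this to the aggregated decomposition yields
\[
C(K_1) = I(p^*;K_1) \leq \sum_k \gamma_k I(p^*; A_k K_1 B_k) \leq \sum_k \gamma_k C(A_k K_1 B_k) \leq C(K_1),
\]
where the last inequality is the data-processing bound $C(A_k K_1 B_k) \leq C(K_1)$. Equality throughout forces $C(A_k K_1 B_k) = C(K_1)$ for every $k$ with $\gamma_k > 0$. If $A_k$ were not a permutation, then the channel $A_k K_1$ would effectively use a proper subset of the rows of $K_1$, and {\bf AS1} would yield $C(A_k K_1 B_k) \leq C(A_k K_1) < C(K_1)$, a contradiction. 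If $B_k$ were not a permutation, then two columns of $A_k K_1$---which are non-proportional because $A_k$ is already a permutation and the columns of $K_1$ are pairwise non-proportional by {\bf AS2}---would be merged into a single column of $A_k K_1 B_k$, strictly destroying information. Hence each $A_k$ and each $B_k$ is a permutation matrix, which also forces the equal sizes.

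For step (ii), permutation matrices preserve the Frobenius norm, so $\|A_k K_1 B_k\|_F = \|K_1\|_F$ for every $k$. The triangle inequality together with strict convexity of the Frobenius norm gives
\[
\|K_1\|_F = \Big\|\sum_k \gamma_k A_k K_1 B_k\Big\|_F \leq \sum_k \gamma_k \|A_k K_1 B_k\|_F = \|K_1\|_F ,
\]
so all of the $A_k K_1 B_k$ coincide; since their convex combination is $K_1$, each equals $K_1$. Step (iii) then follows from {\bf AS3}: each equality $A_k K_1 B_k = K_1$ is of the form $K_1 = P_1 K_1 P_2 D$ with $D = I$, forcing $A_k = B_k = I$ for every $(\alpha_1,\alpha_2)$. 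Fixing any $\alpha_2$ in $R_{1,\alpha_1} R_{2,\alpha_2} = I$ shows $R_{1,\alpha_1} = R_{2,\alpha_2}^{-1}$ is independent of $\alpha_1$, and running the same argument on the remaining three families collapses each of them to a single matrix. This is precisely $\beta_1 = \beta_2 = 1$, and full rank is automatic since the products equal the identity.

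The delicate part is step (i): the strict capacity drops when $A_k$ or $B_k$ fails to be a permutation must be argued carefully, using {\bf AS1} (to rule out restriction to a proper subset of input rows) and {\bf AS2} combined with the invertibility of $A_k$ (to rule out merging any two columns of $A_k K_1$). Once step (i) is in place, the Frobenius-norm collapse in (ii) and the appeal to {\bf AS3} in (iii) are routine.
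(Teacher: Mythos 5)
Your proof is correct, and its first half coincides with the paper's own argument: the chain $C_1 = I(\mathbf{p}^X;K_1) \le \sum_k \gamma_k I(\mathbf{p}^X; A_k K_1 B_k) \le C_1$ forces every term to achieve capacity under $\mathbf{p}^X$, with {\bf AS1} ruling out a non-permutation $A_k$ (the induced input $\mathbf{p}^X A_k$ acquires a zero entry) and {\bf AS2} ruling out a non-permutation $B_k$ (merging two non-proportional columns strictly loses information --- exactly the log-sum-inequality computation the paper carries out in (\ref{ttfr1})--(\ref{ttfr3}), which you correctly flag as the delicate point still requiring detail). Where you genuinely diverge is the collapse to a single term. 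The paper applies the log-sum inequality a second time, to the convex combination itself in (\ref{argst1}), concluding only that distinct terms have pairwise proportional columns, i.e. $K_1 = P_1 K_1 P_2 D$ for some diagonal $D$, and then needs the full strength of {\bf AS3} to force $P_1$, $P_2$ and $D$ to be identities. You instead note that permutations preserve the Frobenius norm, so equality in $\|K_1\|_F = \|\sum_k \gamma_k A_k K_1 B_k\|_F \le \sum_k \gamma_k \|A_k K_1 B_k\|_F$ together with strict convexity of the $\ell_2$ ball forces all terms to be positively proportional, hence (having equal norms) identical, hence each equal to $K_1$; only the $D=I$ instance of {\bf AS3} is then needed. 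This is shorter, eliminates the second log-sum computation, and your explicit collapse of the four families from $R_{1,\alpha_1}R_{2,\alpha_2}=I$ and $T_{2,\alpha_2}T_{1,\alpha_1}=I$ is if anything more careful than the paper's one-line conclusion that $\beta_1=\beta_2=1$. One minor overreach: step (i) alone does not force $n_1=n_2$ and $m_1=m_2$, since a full-rank square product only bounds the factor ranks from one side; the paper obtains the size equalities after the lemma by also substituting (\ref{k1_equiv}) into (\ref{k2_equiv}). This does not affect the lemma itself, which claims only uniqueness of the term and full rank of the products.
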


\begin{proof}
Let $C_1$ be the capacity of $K_1$, $C_{\alpha_1,\alpha_2}$ be the
capacity of $R_{1,\alpha_1} R_{2,\alpha_2} K_1 T_{2,\alpha_2}
T_{1,\alpha_1}$. Let $I(K,{\bf p})$ denote the mutual information of DMC $K$ with the input distribution represented by row vector ${\bf p}$. Let ${\bf p}^X$ be the capacity-achieving input distribution of $K_1$. We also denote this distribution in terms of the probability mass function (PMF) $p^X(x)$ of $x= 1,\dots,n_1$ as needed. Denote the entry of $R_{1,\alpha_1} R_{2,\alpha_2} K_1 T_{2,\alpha_2} T_{1,\alpha_1}$ with index $(x,y)$ by $p_{\alpha_1,\alpha_2}(y|x)$, considering that they describe transition probabilities. Based on \cite[Theorem 2.7.4]{coverbook06}, we have
\begin{equation} \label{k1k1mi}
C_1 \leq \sum_{\alpha_1=1, \alpha_2=1}^{\beta_1, \beta_2}
g_{1(\alpha_1)} g_{2(\alpha_2)} I_{\alpha_1,\alpha_2}
\end{equation}
Note that $I(R_{1,\alpha_1} R_{2,\alpha_2} K_1 T_{2,\alpha_2} T_{1,\alpha_1}, {\bf p}^X) \leq C_{\alpha_1,\alpha_2}$, and $C_{\alpha_1,\alpha_2} \leq C_1= I(K_1, {\bf p}^X)$. It is clear that if $I(R_{1,\alpha_1} R_{2,\alpha_2} K_1 T_{2,\alpha_2} T_{1,\alpha_1}, {\bf p}^X)< I(K_1, {\bf p}^X)$ for any $\{ \alpha_1,\alpha_2 \}$, it will follow from (\ref{k1k1mi}) that $C_1< C_1$ which is contradictory. Therefore, it is required that $I(R_{1,\alpha_1} R_{2,\alpha_2} K_1 T_{2,\alpha_2} T_{1,\alpha_1}, {\bf p}^X)= I(K_1, {\bf p}^X)$ for all $\{ \alpha_1,\alpha_2 \}$. In what follows, we show that $I(R_{1,\alpha_1} R_{2,\alpha_2} K_1 T_{2,\alpha_2} T_{1,\alpha_1}, {\bf p}^X)< I(K_1, {\bf p}^X)$ holds for the cases in which $R_{1,\alpha_1} R_{2,\alpha_2}$ or $T_{2,\alpha_2} T_{1,\alpha_1}$ is not full-rank, thereby ruling them out.

We first consider what happens if $R_{1,\alpha_1} R_{2,\alpha_2}$ is not full-rank, by comparing $I(R_{1,\alpha_1} R_{2,\alpha_2} K_1, {\bf p}^X)$ with $I(K_1, {\bf p}^X)$. Given the formula \cite[eq. (2.111)]{coverbook06} of mutual information
\begin{equation}
I(X;Y)= H(Y)- \sum_x p(x) H(Y|X=x)
\end{equation}
it is easy to see that $I(R_{1,\alpha_1} R_{2,\alpha_2} K_1, {\bf p}^X)= I(K_1, {\bf p}^X R_{1,\alpha_1} R_{2,\alpha_2})$, since $R_{1,\alpha_1} R_{2,\alpha_2} K_1$ with input distribution ${\bf p}^X$ and $K_1$ with input distribution ${\bf p}^X R_{1,\alpha_1} R_{2,\alpha_2}$ result in the same output ($Y$) distribution, as well as the same row entropy ($H(Y|X=x)$) distribution. With $R_{1,\alpha_1} R_{2,\alpha_2}$ being not full-rank, there should be at least one zero entry in the probability vector ${\bf p}^X R_{1,\alpha_1} R_{2,\alpha_2}$, and ${\bf p}^X R_{1,\alpha_1} R_{2,\alpha_2}$ cannot be a capacity achieving distribution for $K_1$, given assumption {\bf (I)}. On the other hand, based on data processing inequality \cite[Th. 2.8.1]{coverbook06}, we have $I(R_{1,\alpha_1} R_{2,\alpha_2} K_1 T_{2,\alpha_2} T_{1,\alpha_1}, {\bf p}^X) \leq I(R_{1,\alpha_1} R_{2,\alpha_2} K_1, {\bf p}^X)$. Consequently, $I(R_{1,\alpha_1} R_{2,\alpha_2} K_1 T_{2,\alpha_2} T_{1,\alpha_1}, {\bf p}^X) \leq I(R_{1,\alpha_1} R_{2,\alpha_2} K_1, {\bf p}^X)= I(K_1, {\bf p}^X R_{1,\alpha_1} R_{2,\alpha_2})< I(K_1, {\bf p}^X)$, which leads to contradiction as discussed above, and thus $R_{1,\alpha_1} R_{2,\alpha_2}$ must be full-rank for all $\{ \alpha_1,\alpha_2 \}$.

Second, we show that with $R_{1,\alpha_1} R_{2,\alpha_2}$ being full-rank, $I(R_{1,\alpha_1} R_{2,\alpha_2} K_1 T_{2,\alpha_2} T_{1,\alpha_1}, {\bf p}^X)< I(K_1, {\bf p}^X)$ holds if $T_{2,\alpha_2} T_{1,\alpha_1}$ is not full-rank, by comparing $I(R_{1,\alpha_1} R_{2,\alpha_2} K_1 T_{2,\alpha_2} T_{1,\alpha_1}, {\bf p}^X)$ with $I(R_{1,\alpha_1} R_{2,\alpha_2} K_1, {\bf p}^X)$. We use $p(y|x)$ and $p'(y|x)$ to denote the entries of $R_{1,\alpha_1} R_{2,\alpha_2} K_1$ and $R_{1,\alpha_1} R_{2,\alpha_2} K_1 T_{2,\alpha_2} T_{1,\alpha_1}$ with index $(x,y)$ respectively, considering that they describe transition probabilities. It is clear that with $T_{2,\alpha_2} T_{1,\alpha_1}$ being full-rank (and thus a permutation) matrix, $I(R_{1,\alpha_1} R_{2,\alpha_2} K_1 T_{2,\alpha_2} T_{1,\alpha_1}, {\bf p}^X)= I(R_{1,\alpha_1} R_{2,\alpha_2} K_1, {\bf p}^X)$, so we just consider a representative case of $T_{2,\alpha_2} T_{1,\alpha_1}$ being not full-rank: $T_{2,\alpha_2} T_{1,\alpha_1}$ is obtained from switching the $1$ entry at index $(y_1, y_1)$ with the $0$ entry at index $(y_1, y_2)$ in the $m_1 \times m_1$ identity matrix. This results in the relation between $p(y|x)$ and $p'(y|x)$ (for all $x=1,\dots,n_1$) given by: $p'(y_2|x)= p(y_1|x)+ p(y_2|x)$, $p'(y_1|x)=0$, and $p'(y|x)=p(y|x)$ for all other values of $y$ from $1$ through $m_1$. Based on log sum inequality \cite[Th. 2.7.1]{coverbook06}, for all $x=1,\dots,n_1$ we have
\begin{equation} \label{ttfr1}
\begin{split}
& p'(y_2|x) p^X(x) \log \frac{p'(y_2|x)}{\sum_{x=1}^{n_1} p'(y_2|x) p^X(x)}\\
\leq & p(y_2|x) p^X(x) \log \frac{p(y_2|x)}{\sum_{x=1}^{n_1} p(y_2|x) p^X(x)}+ p(y_1|x) p^X(x) \log \frac{p(y_1|x)}{\sum_{x=1}^{n_1} p(y_1|x) p^X(x)}
\end{split}
\end{equation}
and consequently
\begin{equation} \label{ttfr2}
\begin{split}
& \sum_{x=1}^{n_1} p'(y_2|x) p^X(x) \log \frac{p'(y_2|x)}{\sum_{x=1}^{n_1} p'(y_2|x) p^X(x)}\\
\leq & \sum_{x=1}^{n_1} \left[ p(y_2|x) p^X(x) \log \frac{p(y_2|x)}{\sum_{x=1}^{n_1} p(y_2|x) p^X(x)}+ p(y_1|x) p^X(x) \log \frac{p(y_1|x)}{\sum_{x=1}^{n_1} p(y_1|x) p^X(x)} \right]
\end{split}
\end{equation}
Note that the left hand side of (\ref{ttfr2}) makes part of $I(R_{1,\alpha_1} R_{2,\alpha_2} K_1 T_{2,\alpha_2} T_{1,\alpha_1}, {\bf p}^X)$, and the right hand side of (\ref{ttfr2}) makes part of $I(R_{1,\alpha_1} R_{2,\alpha_2} K_1, {\bf p}^X)$, and the remaining terms in the two mutual informations are the same since there is no change made on the output symbols other than $y_1$ and $y_2$, and consequently
\begin{equation} \label{ttfr3}
I(R_{1,\alpha_1} R_{2,\alpha_2} K_1 T_{2,\alpha_2} T_{1,\alpha_1}, {\bf p}^X) \leq I(R_{1,\alpha_1} R_{2,\alpha_2} K_1, {\bf p}^X)
\end{equation}
It is clear that for the equality to hold in (\ref{ttfr3}), the equality needs to hold in (\ref{ttfr1}) for $x=1,\dots,n_1$. Given assumption {\bf (I)} which specifies that $p^X(x)> 0$ for $x=1,\dots,n_1$, it follows that, the equality holds in (\ref{ttfr3}) only when $p(y_2|x)/p(y_1|x)$ is constant for $x=1,\dots,n_1$, or one of $p(y_1|x)$ and $p(y_2|x)$ is zero for $x=1,\dots,n_1$. This leads to the requirement that $K_1$ has a column which is a multiple of another column, or an all-zero column, thereby contradicting assumption {\bf (II)}. Therefore with $T_{2,\alpha_2} T_{1,\alpha_1}$ being not full-rank, strict inequality holds in (\ref{ttfr3}), which in turn leads to $I(R_{1,\alpha_1} R_{2,\alpha_2} K_1 T_{2,\alpha_2} T_{1,\alpha_1}, {\bf p}^X)< I(K_1, {\bf p}^X)$ and $C_1< C_1$ which is contradictory. We have now completed the proof for that $R_{1,\alpha_1} R_{2,\alpha_2}$ and $T_{2,\alpha_2} T_{1,\alpha_1}$ need to be full-rank (and consequently permutation matrices) for all $\{ \alpha_1,\alpha_2 \}$.

Following from the above conclusion, we consider what are further required for equality to hold in (\ref{k1k1mi}), based on log sum inequality \cite[Th. 2.7.1]{coverbook06}. It follows easily from this inequality that, for any $x=1,\dots,n_1$ and $y=1,\dots,m_1$,
\begin{equation} \label{argst1}
\begin{split}
& \left( \sum_{\alpha_1=1, \alpha_2=1}^{\beta_1, \beta_2} g_{1(\alpha_1)} g_{2(\alpha_2)} p_{\alpha_1,\alpha_2}(y|x) p^X(x) \right) \log \frac{\sum_{\alpha_1=1, \alpha_2=1}^{\beta_1, \beta_2} g_{1(\alpha_1)} g_{2(\alpha_2)} p_{\alpha_1,\alpha_2}(y|x)} {\sum_{x=1}^{n_1} \sum_{\alpha_1=1, \alpha_2=1}^{\beta_1, \beta_2} g_{1(\alpha_1)} g_{2(\alpha_2)} p_{\alpha_1,\alpha_2}(y|x) p^X(x)}\\
\leq & \sum_{\alpha_1=1, \alpha_2=1}^{\beta_1, \beta_2} g_{1(\alpha_1)} g_{2(\alpha_2)} p_{\alpha_1,\alpha_2}(y|x) p^X(x) \log \frac{ p_{\alpha_1,\alpha_2}(y|x)} {\sum_{x=1}^{n_1} p_{\alpha_1,\alpha_2}(y|x) p^X(x)}
\end{split}
\end{equation}
It is clear that the summation of (\ref{argst1}) over $x=1,\dots,n_1$ and $y=1,\dots,m_1$ leads to (\ref{k1k1mi}), therefore, for the equality to hold in (\ref{k1k1mi}), it is required that the equality holds in (\ref{argst1}) for all $x=1,\dots,n_1$ and $y=1,\dots,m_1$, which is satisfied only when the $y$-th column of one $R_{1,\alpha_1} R_{2,\alpha_2} K_1 T_{2,\alpha_2} T_{1,\alpha_1}$ term is a multiple of the $y$-th column of another such term, for all $y=1,\dots,m_1$. This in turn requires that ``different'' such terms must be related through diagonal matrices, e.g. it is required that
\begin{equation}
R_{1,1} R_{2,1} K_1 T_{2,1} T_{1,1}= R_{1,1} R_{2,2} K_1 T_{2,2} T_{1,1} D
\end{equation}
with $D$ being a diagonal matrix with the diagonal entries being positive. Considering that $R_{1,1} R_{2,1}$, $T_{2,1} T_{1,1}$, $R_{1,1} R_{2,2}$ and $T_{2,2} T_{1,1}$ are permutation matrices, it follows that $K_1= P_1 K_1 P_2 D$ with $P_1$, $P_2$ being permutation matrices. Given assumption {\bf (III)}, it is required that both $P_1$ and $P_2$ are identity matrices, and also required that $D$ is identity, and $\{ R_{1,\alpha_1} R_{2,\alpha_2}, T_{2,\alpha_2} T_{1,\alpha_1} \}$ are the same for all $\{ \alpha_1,\alpha_2 \}$. Consequently, there should be only one term in the right hand side of (\ref{k1k1lr}). Thus we have proved that $\beta_1= \beta_2= 1$, which in turn implies that we can simplify the notations through $R_{1,\alpha_1}= R_1$, $R_{2,\alpha_2}= R_2$, $T_{1,\alpha_1}= T_1$, $T_{2,\alpha_2}= T_2$.
\end{proof}

Now that we have established ${\rm rank} (R_1 R_2)= n_1$ and ${\rm rank} (T_2 T_1)= m_1$, we consider what implications they have on $R_1$, $R_2$, $T_1$, $T_2$. Given the fact that ${\rm rank} (AB) \leq \min \{ {\rm rank} (A),{\rm rank} (B) \}$, it is further implied that
\begin{equation} \label{rank_rt1}
{\rm rank}(R_1) \geq n_1, {\rm rank}(R_2) \geq n_1, {\rm rank}(T_1)
\geq m_1, {\rm rank}(T_2) \geq m_1
\end{equation}
Similarly, by substituting (\ref{k1_equiv}) into (\ref{k2_equiv}), it
can be derived that
\begin{equation} \label{rank_rt2}
{\rm rank}(R_1) \geq n_2, {\rm rank}(R_2) \geq n_2, {\rm rank}(T_1)
\geq m_2, {\rm rank}(T_2) \geq m_2
\end{equation}
On the other hand, for $R_1$ of size $n_1 \times n_2$,  $R_2$ of
size $n_2 \times n_1$, $T_1$ of size $m_2 \times m_1$, $T_2$ of size
$m_1 \times m_2$, the ranks should satisfy
\begin{equation} \label{rank_rt3}
\begin{split}
& {\rm rank}(R_1) \leq \min (n_1,n_2), {\rm rank}(R_2) \leq \min
(n_1,n_2),\\
& {\rm rank}(T_1) \leq \min (m_1,m_2), {\rm rank}(T_2) \leq \min
(m_1,m_2)
\end{split}
\end{equation}
Given (\ref{rank_rt1}), (\ref{rank_rt2}) and (\ref{rank_rt3}), it
follows that ${\rm rank}(R_1)= {\rm rank}(R_2)= n_1= n_2$ and ${\rm
rank}(T_1)= {\rm rank}(T_2)= m_1= m_2$. Since square full-rank $(0,1)$ matrices are permutation matrices, these four
matrices must be permutation matrices, and in turn it is necessary
to have $K_2= R K_1 T$ with $R$ and $T$ being permutation matrices
for $K_1$ and $K_2$ to be equivalent. It is easy to see that this
condition is also sufficient for the equivalence between $K_1$ and
$K_2$, and the proof is complete.

\section{Proof of Theorem \ref{thr1}} \label{pr_thr1}

We start from (\ref{def_ci}) with $R_{\alpha}$'s and $T_{\alpha}$'s
being pure channels, as equivalent to Definition \ref{defn1}. It is
clear that entries of $K_2$ are linear combinations of the entries
of $K_1$. Considering the fact that there is a one-to-one mapping
between the entries of $K_1$ and the entries of $\mathbf{w}_1$, as
well as the same situation for $K_2$ and $\mathbf{w}_2$, it follows
that there is a matrix $P$ such that $\mathbf{w}_2= P \mathbf{w}_1$,
and the conditions for $K_2 \subseteq K_1$ can be related to what
properties the combining coefficients $[P]_{(i,j)}$'s have. Based on
Birkhoff's Theorem \cite[p.30]{marshallbook09}, both $K_1$ and $K_2$
are inside the convex hull of $n \times n$ permutation matrices,
therefore it is sufficient for $R_{\alpha}$'s and $T_{\alpha}$'s to
contain only permutation matrices; otherwise $\sum_{\alpha=1}
^{\beta} g_{(\alpha)} R_{\alpha} K_1 T_{\alpha}$ will fall out of
the convex hull of $n \times n$ permutation matrices, which
contradicts with the doubly stochastic assumption. Consequently, for
each $\alpha$, $R_{\alpha} K_1 T_{\alpha}$ gives a matrix having
exactly the same set of entries as $K_1$, generated by permuting the
columns and rows of $K_1$. As a result, $R_{\alpha}$'s and
$T_{\alpha}$'s do not replace any row of $K_1$ with the duplicate of
another row, or merge any column into another column and then
replace it with zeros.

We now consider the properties of $[P]_{(i,j)}$'s based on the
structure of $R_{\alpha} K_1 T_{\alpha}$. We have $\sum_j
[P]_{(i,j)}= \sum_{\alpha=1} ^{\beta} g_{(\alpha)}= 1$ for $i=
1,\dots,n^2$ since each entry of $K_1$ is contained in $R_{\alpha}
K_1 T_{\alpha}$ exactly once, for $\alpha= 1,\dots,\beta$. On the
other hand, since each entry $[K_2]_{(i,j)}$ of $K_2$ is the convex
combination of the entries with the same index $(i,j)$ of
$R_{\alpha} K_1 T_{\alpha}$'s, while each entry of $R_{\alpha} K_1
T_{\alpha}$ is exactly an entry of $K_1$, it follows that $\sum_i
[P]_{(i,j)}= \sum_{\alpha=1} ^{\beta} g_{(\alpha)}= 1$ for $j=
1,\dots,n^2$. Also, it is straightforward to see that $[P]_{(i,j)}
\geq 0$ for $i,j=1,\dots,n^2$ due to the non-negativeness of
$g_{(\alpha)}$'s. Consequently, $P$ is doubly stochastic, and
$\mathbf{w}_2= P \mathbf{w}_1$ implies that $\mathbf{w}_2 \prec
\mathbf{w}_1$ \cite[p.155]{marshallbook09}, completing the proof.

\section{Proof of Theorem \ref{thr2}} \label{pr_thr2}

Let $\mathbf{w}_1$ and $\mathbf{w}_2$ be the $n^2 \times 1$ vectors
containing all the entries of $K_1$ and $K_2$ respectively. It is
easy to see that $\mathbf{w}_1$ and $\mathbf{w}_2$ contain the
entries of $\mathbf{v}_1$ and $\mathbf{v}_2$ each duplicated $n$
times respectively. Given $K_2 \subseteq K_1$, based on Theorem
\ref{thr1} we know that $\mathbf{w}_2 \prec \mathbf{w}_1$, thus
$\sum_{i=1}^k n \cdot v_{1(i)}^{\downarrow} \geq \sum_{i=1}^k n
\cdot v_{2(i)}^{\downarrow}$ for $k=1,\dots,n$, and it follows that
$\sum_{i=1}^k v_{1(i)}^{\downarrow} \geq \sum_{i=1}^k
v_{2(i)}^{\downarrow}$ for $k=1,\dots,n$. In addition, $\sum_{i=1}^n
v_{1(i)}= \sum_{i=1}^n v_{2(i)}= 1$ as required for stochastic
matrices, therefore $\mathbf{v}_2 \prec \mathbf{v}_1$ is necessary
for $K_2 \subseteq K_1$.

We next prove that the existence of a probability vector $\mathbf{x}
\in \mathbb{R}^n_+$ such that $\mathbf{v}_1 \circledast \mathbf{x}=
\mathbf{v}_2$ is sufficient for $K_2 \subseteq K_1$. Let $P$ be the
$n \times n$ permutation matrix such that $\mathbf{x} P$ is cyclic
shifted to the right by $1$ with respect to $\mathbf{x}$, and let
$X$ be the $n \times n$ matrix with the $i$-th column being $P^{i-1}
\mathbf{x}^T$. It is easy to see that both $P^{i-1}$ and $X$ are
circulant. Given $\mathbf{v}_1 \circledast \mathbf{x}=
\mathbf{v}_2$, it follows that $\mathbf{v}_1 X= \mathbf{v}_2$ due to
the definition of circular convolution. Also, notice that $P^{i-1}
X= X P^{i-1}$ since the multiplication of two circulant matrices are
commutative. Consequently, the $i$-th row of $K_1$, given by
$\mathbf{v}_1 P^{i-1}$, and the $i$-th row of $K_2$, given by
$\mathbf{v}_2 P^{i-1}$, are related through $(\mathbf{v}_1 P^{i-1})
X= \mathbf{v}_1 X P^{i-1}= (\mathbf{v}_2 P^{i-1})$. It then follows
that $K_1 X= K_2$ with a stochastic matrix $X$, i.e. Definition
\ref{defn1} is satisfied, and the proof is complete.

An alternative proof based on FFT: Let $U$ be the $n \times n$ FFT
matrix. Then $\mathbf{v}_1 \circledast \mathbf{x}= \mathbf{v}_2$
$\Rightarrow$ $FFT(\mathbf{v}_1) \circ FFT(\mathbf{x})=
FFT(\mathbf{v}_2)$ $\Rightarrow$ ${\rm diag}(FFT(\mathbf{v}_1))
 {\rm diag}(FFT(\mathbf{x}))= {\rm diag}(FFT(\mathbf{v}_2))$ $\Rightarrow$ $U {\rm
 diag}(FFT(\mathbf{v}_1)) U^{*} U
 {\rm diag}(FFT(\mathbf{x})) U^{*}= U {\rm diag}(FFT(\mathbf{v}_2))
 U^{*}$ $\Rightarrow$ $K_1 X= K_2$, and the proof is complete.

\section{Proof of Corollary \ref{crl1}} \label{pr_crl1}

It is clear that $3 \times 3$ symmetric DMCs which are not circulant
can have only the following layout:
\begin{equation}
\begin{bmatrix} 1 & 2 & 3 \\ 2 & 3 & 1 \\ 3 & 1 & 2 \end{bmatrix}
\end{equation}
and it can be made circulant by permuting the second and third rows.
Also, $4 \times 4$ symmetric DMCs which are not circulant can have
only the following layouts:
\begin{equation}
\begin{bmatrix} 1 & 2 & 3 & 4 \\ 2 & 1 & 4 & 3 \\ 3 & 4 & 1 & 2 \\ 4 & 3 & 2 & 1
\end{bmatrix}, \begin{bmatrix} 1 & 2 & 3 & 4 \\ 2 & 1 & 4 & 3 \\ 3 & 4 & 2 & 1 \\ 4 & 3 & 1 &
2 \end{bmatrix}, \begin{bmatrix} 1 & 2 & 3 & 4 \\ 3 & 1 & 4 & 2 \\
2 & 4 & 1 & 3 \\ 4 & 3 & 2 & 1 \end{bmatrix},
\end{equation}
together with other layouts obtained by permuting their rows. For
each of these layouts, it is easy to check with MATLAB that there
exists column permutations which can make each of its rows cyclic
shift of the others. Therefore for $n=3,4$, $n \times n$ symmetric
DMCs can be transformed into circulant DMCs. Consequently, the
results in Theorem \ref{thr2} can be applied to circulant DMCs, and
the second statement of the corollary holds.

\section{Proof of Theorem \ref{thr4}} \label{pr_thr4}

Since an $n_2 \times m_2$ stochastic matrix is determined by its
$n_2$ rows and first $m_2-1$ columns, the class of all $n_2 \times
m_2$ stochastic matrices can be viewed as a convex polytope in $n_2
(m_2-1)$ dimensions. We apply Carath\'{e}odory's theorem
\cite[p.155]{rockafellarbook70}, which asserts that if a subset
$\mathbb{S}$ of $\mathbb{R}^m$ is $k$-dimensional, then every vector
in the convex hull of $\mathbb{S}$ can be expressed as a convex
combination of at most $k+1$ vectors in $\mathbb{S}$, on
(\ref{def_ci}) with $K_1$ of size $n_1 \times m_1$ and $K_2$ of size
$n_2 \times m_2$. It is clear that $R_{\alpha} K_1 T_{\alpha}$ and
$K_2$ are at most $n_2 (m_2-1)$-dimensional. Therefore if $K_2$ is
in the convex hull of $\{ R_{\alpha} K_1 T_{\alpha} \}_{\alpha=1}
^{\beta}$, it can be expressed as a convex combination of at most
$n_2 (m_2-1)+1$ matrices in $\{ R_{\alpha} K_1 T_{\alpha}
\}_{\alpha=1} ^{\beta}$, i.e. the number of necessary $\{
R_{\alpha},T_{\alpha} \}$ pairs can be bounded as $\beta= \beta_1
\leq n_2 (m_2-1)+1$ if (\ref{def_ci}) holds. A similar proof can
follow for the case of both $K_1$ and $K_2$ being $n \times n$
doubly stochastic, in which they are at most $(n-1)^2$-dimensional.

\section{Proof of Theorem \ref{thr2_alg2}} \label{pr_thr2_alg2}

First, we prove that at the $t$-th iteration, the $t$-th entry of ${\bf g}_{t}$, denoted by ${\bf g}_{t(t)}$, has the same sign as $\left< {\bf r}_{t-1}, [A_{\rm sel}]_{(:,t)} \right>$ (we use the notation $<\cdot,\cdot>$ for inner product in order to make it clearly identifiable as a scalar), and therefore selecting a negative
inner product would never produce a ${\bf g}_{t} \geq 0$, based on the orthogonality property that ${\bf r}_{t-1}$ is perpendicular to all the columns of $[A_{\rm sel}]_{(:,1:t-1)}$.
Suppose $[A_{\rm sel}]_{(:,t)}$ is the selected column of $A$. Based on
\begin{equation} \label{resid_evo0}
{\bf h}= {\bf r}_{t-1}+ [A_{\rm sel}]_{(:,1:t-1)} {\bf g}_{t-1}=
{\bf r}_{t}+ [A_{\rm sel}]_{(:,1:t)} {\bf g}_{t}
\end{equation}
we have
\begin{equation} \label{resid_evo1}
[A_{\rm sel}]_{(:,1:t)} {\bf g}_{t}= [A_{\rm sel}]_{(:,1:t-1)} {\bf
g}_{t-1}+ {\bf r}_{t-1}- {\bf r}_{t}
\end{equation}
By taking the inner product of (\ref{resid_evo1}) with ${\bf
r}_{t-1}$, we have
\begin{equation} \label{resid_evo2}
\left< {\bf r}_{t-1}, [A_{\rm sel}]_{(:,t)} \right> {\bf
g}_{t(t)}= \left< {\bf r}_{t-1}, {\bf r}_{t-1}-{\bf r}_{t} \right>=
\| {\bf r}_{t-1} \|_2 \left( \| {\bf r}_{t-1} \|_2- \frac{ \left<
{\bf r}_{t-1}, {\bf r}_{t} \right> }{\| {\bf r}_{t-1} \|_2} \right)
\end{equation}
Clearly, $\left< {\bf r}_{t-1}, {\bf r}_{t} \right>/\| {\bf r}_{t-1}
\|_2$ is the scalar projection of ${\bf r}_{t}$ onto ${\bf r}_{t-1}$
and thus $\left< {\bf r}_{t-1}, {\bf r}_{t} \right>/\| {\bf r}_{t-1}
\|_2 \leq \| {\bf r}_{t} \|_2$. In addition, $\| {\bf r}_{t} \|_2<
\| {\bf r}_{t-1} \|_2$ due to the involvement of an additional
column in the least-square problem. Consequently, (\ref{resid_evo2})
is positive, which implies that ${\bf g}_{t(t)}$ has the same sign as $\left< {\bf r}_{t-1}, [A_{\rm sel}]_{(:,t)} \right>$ and $\left< {\bf r}_{t-1}, [A_{\rm sel}]_{(:,t)}
\right> >0$ is necessary for ${\bf g}_{t} \geq {\bf 0}$.

Second, we prove that it is always possible to select a column
$[A_{\rm sel}]_{(:,t)}$ from $A$, such that $\left< {\bf r}_{t-1},
[A_{\rm sel}]_{(:,t)} \right> >0$, before the iterations terminate
(i.e. ${\bf r}_{t} \ne 0_{p \times 1}$). Define three sets of $p
\times 1$ vectors $S_1= \{ {\bf v}|\left< {\bf r}_{t}, {\bf v}
\right> >0 \}$, $S_2= \{ {\bf v}|\left< {\bf r}_{t}, {\bf v} \right>
<0 \}$ and $S_3= \{ {\bf v}|\left< {\bf r}_{t}, {\bf v} \right> =0
\}$. It is clear that $S_1$, $S_2$ and $S_3$ are mutually exclusive
and are all convex. It is also clear that all the columns of
$[A_{\rm sel}]_{(:,1:t)}$ is in $S_3$, and ${\bf h} \in S_1$ based
on (\ref{resid_evo0}). If there is no column of $A$ which is in
$S_1$, then ${\bf h}$ cannot be in the convex hull of the columns of
$A$, hence contradicting the fact that $A {\bf g}= {\bf h}$ with
some probability vector ${\bf g}$. Therefore a positive inner
product together with its corresponding column of $A$ is always
available for selection, and the proof is complete.

\section{Proof of Theorem \ref{thr1_alg2}} \label{pr_thr1_alg2}

We first start with two preparatory lemmas which generalize Conjecture \ref{prop1}.

\begin{lem} \label{prop1_ext1}
Let $n \times m$ matrix $G$ have all of its entries being non-negative and all of its columns being linearly independent, and $G {\bf x}_1= {\bf x}_2$
with all entries of vector ${\bf x}_1$ being non-negative, then
there exists at least one column ${\bf g}_*$ of $G$ such that, with
$G_*$ obtained by excluding ${\bf g}_*$ from $G$,
\begin{equation} \label{x3_pos}
{\bf x}_3= \arg\min_{\bf x} \| {\bf x}_2- G_* {\bf x} \|_2^2= \arg\min_{\bf x} \| G {\bf x}_1- G_* {\bf x} \|_2^2 \geq {\bf 0}
\end{equation}
\end{lem}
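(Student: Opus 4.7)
My plan is to reduce this lemma directly to Conjecture \ref{prop1}, using the non-negativity of ${\bf x}_1$ to transport the guarantee. First I would apply Conjecture \ref{prop1} to $G$ itself to obtain a distinguished column ${\bf g}_*$, say the $k$-th column of $G$, such that
\[
\hat{\bf x} = \arg\min_{\bf x} \| {\bf g}_* - G_* {\bf x} \|_2^2 \geq {\bf 0},
\]
where $G_*$ is $G$ with ${\bf g}_*$ removed. This ${\bf g}_*$ is the column I would nominate as witnessing the lemma.

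The next step is to decompose ${\bf x}_2$ along this column. Writing $[x_1]_k$ for the $k$-th entry of ${\bf x}_1$ and ${\bf x}_{1,-k}$ for ${\bf x}_1$ with that entry deleted (still entrywise non-negative),
\[
{\bf x}_2 = G {\bf x}_1 = [x_1]_k \, {\bf g}_* + G_* \, {\bf x}_{1,-k}.
\]
Substituting the orthogonal decomposition ${\bf g}_* = G_* \hat{\bf x} + {\bf e}_*$, in which ${\bf e}_*$ is the residual perpendicular to the column space of $G_*$, yields
\[
{\bf x}_2 = G_* \bigl( [x_1]_k \hat{\bf x} + {\bf x}_{1,-k} \bigr) + [x_1]_k \, {\bf e}_*.
\]
Because the columns of $G_*$ are linearly independent (being a subset of the linearly independent columns of $G$), the LS problem for ${\bf x}_2$ has a unique solution, and reading off the component lying in the column space of $G_*$ identifies it as ${\bf x}_3 = [x_1]_k \hat{\bf x} + {\bf x}_{1,-k}$.

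Non-negativity of ${\bf x}_3$ then follows termwise: $[x_1]_k \geq 0$ by hypothesis, $\hat{\bf x} \geq {\bf 0}$ by the invocation of Conjecture \ref{prop1}, and ${\bf x}_{1,-k} \geq {\bf 0}$ by hypothesis. I do not anticipate a real technical obstacle beyond keeping the orthogonal bookkeeping tidy; the content of the lemma is essentially that the conic-combination structure of ${\bf x}_1$ lets the non-negativity guarantee of Conjecture \ref{prop1} be pushed through the projection of ${\bf x}_2$ intact.
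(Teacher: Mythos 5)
Your proof is correct, and it reaches the conclusion by a more direct route than the paper. Both arguments begin identically: invoke Conjecture \ref{prop1} to obtain the distinguished column ${\bf g}_*$ with $\hat{\bf x}=\arg\min_{\bf x}\|{\bf g}_*-G_*{\bf x}\|_2^2\geq{\bf 0}$, and split ${\bf x}_2=G{\bf x}_1$ into its $G_*$-part and its ${\bf g}_*$-part (your $[x_1]_k\,{\bf g}_*+G_*{\bf x}_{1,-k}$ is the paper's $y_g{\bf g}_*+G_*{\bf y}_1$). Where you diverge is the last step. The paper phrases the goal as existence of a non-negative solution to the normal equations $G_*^TG_*{\bf x}_3=G_*^T{\bf x}_2$ and certifies it via Farkas' lemma, checking that ${\bf x}_1^TG^TG_*{\bf x}_5\geq 0$ for every ${\bf x}_5$ with $G_*^TG_*{\bf x}_5\geq 0$. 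You instead write ${\bf g}_*=G_*\hat{\bf x}+{\bf e}_*$ with ${\bf e}_*\perp\mathrm{col}(G_*)$, observe that ${\bf x}_2-G_*\bigl([x_1]_k\hat{\bf x}+{\bf x}_{1,-k}\bigr)=[x_1]_k{\bf e}_*$ is orthogonal to $\mathrm{col}(G_*)$, and use uniqueness of the LS solution (linear independence of the columns of $G_*$) to read off ${\bf x}_3=[x_1]_k\hat{\bf x}+{\bf x}_{1,-k}$ explicitly, whose entries are manifestly non-negative. Your version buys an explicit closed form for ${\bf x}_3$ and avoids the duality machinery entirely; in fact it exposes that the paper's Farkas argument is certifying exactly the vector ${\bf y}_1+y_g{\bf x}_4$ that you write down directly. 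The paper's route would remain valid even if one only wanted existence of some non-negative solution without uniqueness, but since $G_*$ has full column rank here, the two are equivalent and yours is the leaner argument.
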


\begin{proof}
According to Conjecture \ref{prop1}, there exists at least one
column ${\bf g}_*$ of $G$ such that, with $G_*$ obtained by
excluding ${\bf g}_*$ from $G$,
\begin{equation} \label{x4_pos}
G_*^T (g_*- G_* {\bf x}_4)= 0
\end{equation}
holds with ${\bf x}_4 \geq 0$. Let ${\bf y}_1$ be the part of ${\bf
x}_1$ corresponding to $G_*$ and $y_g$ be the part of ${\bf x}_1$
corresponding to ${\bf g}_*$, i.e.
\begin{equation} \label{x2_decomp}
{\bf x}_2= G {\bf x}_1= G_* {\bf y}_1+ y_g {\bf g}_*
\end{equation}
The assertion in (\ref{x3_pos}) is equivalent to the
existence of ${\bf x}_3 \geq 0$ such that $G_*^T G_* {\bf x}_3=
G_*^T {\bf x}_2= G_*^T G {\bf x}_1$. In order to prove this, according to Farkas' lemma \cite[Proposition 1.8]{gunterbook95} which states a sufficient condition for such non-negative vector to exist, it suffices to prove that for
any vector ${\bf x}_5$ such that $(G_*^T G_*)^T {\bf x}_5= G_*^T G_*
{\bf x}_5 \geq 0$, $(G_*^T G {\bf x}_1)^T {\bf x}_5= {\bf x}_1^T G^T
G_* {\bf x}_5 \geq 0$ holds. Based on (\ref{x4_pos}) and (\ref{x2_decomp}), it is clear that
\begin{equation}
{\bf x}_1^T
G^T G_* {\bf x}_5= {\bf x}_2^T G_* {\bf x}_5= (G_* {\bf y}_1+ y_g {\bf g}_*)^T G_* {\bf x}_5= {\bf y}_1^T G_*^T G_* {\bf x}_5+ y_g {\bf x}_4^T G_*^T G_* {\bf x}_5 \geq 0
\end{equation}
given the known conditions that $G_*^T G_* {\bf x}_5 \geq 0$, ${\bf x}_4 \geq 0$, ${\bf y}_1 \geq 0$ and $y_g \geq 0$, and we have proved (\ref{x3_pos}) which generalizes Conjecture \ref{prop1}.
\end{proof}

\begin{lem} \label{prop1_ext2}
There exists a set of matrices $\{ G_{k} \}_{k=1}^{m-1}$, in which $G_{m-1}$ is obtained by excluding one column from $G$ and $G_{k}$ is obtained by excluding one column from $G_{k+1}$ for $k=1,\dots,m-2$, such that (\ref{x3_pos}) holds with $G_*$ replaced by any matrix in $\{ G_{k} \}_{k=1}^{m-1}$, i.e. $\arg\min_{\bf x} \| {\bf x}_2- G_{k} {\bf x} \|_2^2 \geq {\bf 0}$ with ${\bf x}_2= G {\bf x}_1$.
\end{lem}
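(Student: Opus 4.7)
The plan is to prove Lemma \ref{prop1_ext2} by induction on the number of columns $m$ of $G$, using Lemma \ref{prop1_ext1} as the single inductive step together with the following nested projection identity: whenever the column space of a matrix $B$ is contained in the column space of a matrix $A$ and both have linearly independent columns, the orthogonal projection of an arbitrary vector ${\bf v}$ onto $\mathrm{col}(B)$ equals the orthogonal projection of $P_A {\bf v}$ onto $\mathrm{col}(B)$, and consequently the least-squares coefficient vector $\arg\min_{\bf x} \|{\bf v}- B {\bf x}\|_2^2$ equals $\arg\min_{\bf x} \|P_A {\bf v}- B {\bf x}\|_2^2$. This identity is what will let me pass a claim proved for a ``reduced'' target vector back to the original ${\bf x}_2$.

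For the base case $m=2$ the lemma coincides with a single application of Lemma \ref{prop1_ext1}. For the inductive step, first invoke Lemma \ref{prop1_ext1} on $G$ and ${\bf x}_2= G {\bf x}_1$ to obtain a column ${\bf g}_*$ of $G$ such that, with $G_{m-1}$ the matrix obtained by deleting ${\bf g}_*$, the coefficient vector ${\bf x}_3:= \arg\min_{\bf x} \|{\bf x}_2- G_{m-1} {\bf x}\|_2^2$ is non-negative. Set $\hat{\bf x}_2:= G_{m-1} {\bf x}_3$, which is the orthogonal projection of ${\bf x}_2$ onto $\mathrm{col}(G_{m-1})$ and is, by construction, a non-negative combination of the columns of $G_{m-1}$. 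The matrix $G_{m-1}$ has non-negative entries and linearly independent columns (both properties are inherited from $G$), so the hypotheses required for another application of the machinery are in force with target $\hat{\bf x}_2$ in place of ${\bf x}_2$ and $G_{m-1}$ in place of $G$.

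Next, apply the inductive hypothesis to $G_{m-1}$ and $\hat{\bf x}_2$ to obtain matrices $G_{m-2}, G_{m-3}, \dots, G_{1}$, each obtained by removing one column from its successor, such that $\arg\min_{\bf x} \|\hat{\bf x}_2- G_k {\bf x}\|_2^2 \geq {\bf 0}$ for $k= 1,\dots,m-2$. Because $\mathrm{col}(G_k) \subseteq \mathrm{col}(G_{m-1})$ for every such $k$ and because $\hat{\bf x}_2$ is precisely the orthogonal projection of ${\bf x}_2$ onto $\mathrm{col}(G_{m-1})$, the nested projection identity combined with the linear independence of the columns of each $G_k$ yields $\arg\min_{\bf x} \|{\bf x}_2- G_k {\bf x}\|_2^2 = \arg\min_{\bf x} \|\hat{\bf x}_2- G_k {\bf x}\|_2^2 \geq {\bf 0}$. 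Coupled with the conclusion of the first step for $k= m-1$, this produces the required chain $\{G_k\}_{k=1}^{m-1}$, closing the induction.

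The main obstacle is recognizing and cleanly justifying the nested projection identity, since it is what makes the inductive hypothesis (which is naturally phrased for the intermediate target $\hat{\bf x}_2$) transfer back to the original target ${\bf x}_2$; without it, the nested sequence of column deletions would appear to correspond to a different target vector at each stage. Once that identity is in place, the rest of the argument is a routine verification that the non-negativity of entries, the linear independence of columns, and the non-negativity of the combination vector are all preserved at each level of the induction, so that Lemma \ref{prop1_ext1} can be re-applied.
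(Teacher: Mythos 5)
Your proposal is correct and follows essentially the same route as the paper: the paper also applies Lemma \ref{prop1_ext1} once to peel off a column, then uses exactly your nested projection identity (that the projection of ${\bf x}_2$ onto $\mathrm{col}(G_{m-2})$ equals the projection of $G_{m-1}{\bf x}_3$ onto $\mathrm{col}(G_{m-2})$) to transfer non-negativity down the chain, iterating column by column. The only difference is presentational: you package the iteration as a formal induction on $m$ with the hypothesis applied to the pair $(G_{m-1}, G_{m-1}{\bf x}_3)$, whereas the paper states the same recursion informally.
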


\begin{proof}
Equation (\ref{x3_pos}) implies that, for ${\bf x}_2= G {\bf
x}_1$, there exists at least one column ${\bf g}_*$ of $G$ such
that, with $G_{m-1}$ obtained by excluding ${\bf g}_*$ from $G$, the
orthogonal projection $G_{m-1} {\bf x}_3$ of ${\bf x}_2$ onto the column
space of $G_{m-1}$ is inside the convex cone generated by the columns of
$G_{m-1}$. Furthermore, for any matrix $G_{m-2}$ whose columns form a
subset of the columns of $G_{m-1}$, it is easy to notice that the
orthogonal projection of $G_{m-1} {\bf x}_3$ onto the column space of
$G_{m-2}$ is identical to the orthogonal projection of ${\bf x}_2$
onto the column space of $G_{m-2}$, i.e. $\arg\min_{\bf x} \| {\bf x}_2- G_{m-2} {\bf x} \|_2^2= \arg\min_{\bf x} \| G_{m-1} {\bf x}_3- G_{m-2} {\bf x} \|_2^2$. Considering that ${\bf x}_3 \geq {\bf 0}$ as proved for (\ref{x3_pos}) above, it follows that there exists a matrix $G_{m-2}$ obtained by excluding a column from $G_{m-1}$, such that $\arg\min_{\bf x} \| G_{m-1} {\bf x}_3- G_{m-2} {\bf x} \|_2^2 \geq {\bf 0}$, and thus $\arg\min_{\bf x} \| {\bf x}_2- G_{m-2} {\bf x} \|_2^2 \geq {\bf 0}$. This in turn implies that by excluding the columns of $G$ one by one, it is always possible to guarantee that
the orthogonal projection of ${\bf x}_2$ onto the linear space
formed by the remaining columns, is inside the convex cone generated
by the remaining columns, at each step, i.e. there exists $\{ G_{k} \}_{k=1}^{m-1}$, in which $G_{m-1}$ is obtained by excluding one column from $G$ and $G_{k}$ is obtained by excluding one column from $G_{k+1}$ for $k=1,\dots,m-2$, such that $\arg\min_{\bf x} \| {\bf x}_2- G_k {\bf x} \|_2^2 \geq {\bf 0}$ for $k=1,\dots,m-1$.
\end{proof}


Note that since $\arg\min_{\bf x} \| {\bf x}_2- G_{k} {\bf x} \|_2^2 \geq {\bf 0}$ is not affected by permuting the columns of $G_{k}$, Lemma \ref{prop1_ext2} can be alternatively stated as follows: there exists at least one column permuted version of $G$, say $G_{m}$, such that $\arg\min_{\bf x} \| G {\bf x}_1- [G_{m}]_{(:,1:k)} {\bf x} \|_2^2 \geq {\bf 0}$ holds for $k=1,\dots,m-1$. This will be applied to prove that Algorithm \ref{alg2} can successfully find a sparse probability vector involved in channel inclusion. Here we use the notations in the description of Algorithm \ref{alg2}, and also new notations as needed. With the presence of inclusion and the actual sparsity level being $s_1$, there exists at least one $p \times s_1$ matrix $A_{\rm sel}^*$ with linearly independent columns, together with $s_1 \times 1$ vector ${\bf g}_{s_1}^* \geq {\bf 0}$, such that $A_{\rm sel}^* {\bf g}_{s_1}^*= {\bf h}$, where $A_{\rm sel}^*$ is defined as a $p \times s_1$ matrix whose columns form a subset of the columns of $A$. Accordingly, there exists at least one column permuted version of $A_{\rm sel}^*$, say $A_{s_1}$, such that
\begin{equation} \label{asel_col_perm}
\arg\min_{\bf g} \| {\bf h}- [A_{s_1}]_{(:,1:k)} {\bf g} \|_2^2 \geq {\bf 0}
\end{equation}
holds for $k=1,\dots,s_1-1$, and also for $k=s_1$ since $A_{s_1} {\bf g}'_{s_1}= {\bf h}$ with ${\bf g}'_{s_1}$ being some entry-permuted version of ${\bf g}_{s_1}^*$. This fact will be used in the following to make categorization of the possible behaviors of Algorithm \ref{alg2} in terms of attempts made on the columns of $A$, from beginning ($t_{\rm act}=0$, $t=1$) to termination (when either a sparse solution is found giving $f=1$, $t= s_1+1$, or the algorithm declares no solution being found giving $f=0$, $t=0$), which will lead to the conclusion that all possible behaviors of Algorithm \ref{alg2} lead to $f=1$.

Mathematically, the behavior of Algorithm \ref{alg2} in terms of attempts made on the columns of $A$ from beginning to termination is defined as this: it is an ordered set $\mathcal{B}$ which has $s_1$ elements, and the $k$-th element $\mathcal{B}_k$ itself is a set with the elements being the columns of $A$ that was attempted for the selection of $[A_{\rm sel}]_{(:,k)}$, for $k=1,\dots,s_1$. Specifically, if some column $[A]_{(:,j)}$ of $A$ was attempted for the selection of $[A_{\rm sel}]_{(:,k)}$, then $[A]_{(:,j)} \in \mathcal{B}_k$, otherwise $[A]_{(:,j)} \not\in \mathcal{B}_k$. Before making the proposed categorization, we establish some useful preliminaries. We refer to $f=1$ as success and $f=0$ as failure when Algorithm \ref{alg2} terminates. Note that $t$ is a function of $t_{\rm act}$ and will be denoted by $t(t_{\rm act})$ as needed for clarification. The term ``residue'' will refer to the residue associated with the $t$ columns $[A_{\rm sel}]_{(:,1:t)}$ (which are already selected), i.e. ${\bf h}- [A_{\rm sel}]_{(:,1:t)} \arg\min_{\bf g} \| {\bf h}- [A_{\rm sel}]_{(:,1:t)} {\bf g} \|_2^2$. We define the notion of order-$t$ generalized failure with specified $[A_{\rm sel}]_{(:,1:t)}$ (i.e. already selected $t$ columns satisfying $\arg\min_{\bf g} \| {\bf h}- [A_{\rm sel}]_{(:,1:t)} {\bf g} \|_2^2 \geq {\bf 0}$), as the situation in which all the (remaining) columns of $A$ having positive inner product with the residue are attempted but not selected (or selected and removed later) as $[A_{\rm sel}]_{(:,t+1)}$ and backtracking has to be performed, i.e. $t(t_{\rm act}+1)= t(t_{\rm act})- 1$, as reflected by Lines $3$ and $16$ in Algorithm \ref{alg2}. Here we allow $t=0$ and treat $[A_{\rm sel}]_{(:,1:0)}$ as an empty ($p \times 0$) matrix accordingly, hence order-$0$ generalized failure is equivalent to failure. Clearly, a generalized failure does not necessarily lead to a failure, unless it is order-$0$, making it a necessary but not sufficient condition of failure. Therefore, with specified $[A_{\rm sel}]_{(:,1:t)}$, by ruling out the possibility of order-$t$ generalized failure, it can be established that Algorithm \ref{alg2} should result in success with such $t$ columns specified. Also, notice that for some order-$t$ generalized failure to occur, a necessary condition is that all possible choices for $[A_{\rm sel}]_{(:,t+1)}$ (i.e. all remaining columns of $A$ having positive inner product with the residue) are attempted, and consequently this becomes a necessary condition for failure to occur. Furthermore, due to the backtracking feature, it is possible for Algorithm \ref{alg2} to attempt any possible choice for $[A_{\rm sel}]_{(:,t+1)}$ (i.e. the columns of $A$ having positive inner product with the residue). Based on Theorem \ref{thr2_alg2}, if some choice of $[A_{\rm sel}]_{(:,t+1)}$ appended to $[A_{\rm sel}]_{(:,1:t)}$ results in non-negative LS solution, then such choice should have positive inner product with the residue. These further imply that with specified $[A_{\rm sel}]_{(:,1:t)}$, if some column $[A]_{(:,j)}$ of $A$ appended to $[A_{\rm sel}]_{(:,1:t)}$ results in non-negative LS solution, i.e. with $[A_{\rm sel}]_{(:,t+1)}= [A]_{(:,j)}$, $\arg\min_{\bf g} \| {\bf h}- [A_{\rm sel}]_{(:,1:t+1)} {\bf g} \|_2^2 \geq {\bf 0}$, but $[A]_{(:,j)}$ has never been attempted for the selection of $[A_{\rm sel}]_{(:,t+1)}$ after the termination of Algorithm \ref{alg2}, then Algorithm \ref{alg2} should result in success with such specified $[A_{\rm sel}]_{(:,1:t)}$. With this implication utilized below, the possible behaviors of Algorithm \ref{alg2} will be categorized into the ones that lead to success for sure and the ones that may lead to generalized failures, in a recursive manner, and we finally rule out the possibility of generalized failures.

Let $P_0$ denote the set of all possible $\mathcal{B}$'s, i.e. all possible behaviors in terms of column attempts of Algorithm \ref{alg2}. Define $P_k$ as the set of possible behaviors of Algorithm \ref{alg2} with $[A_{s_1}]_{(:,1:k)}$ specified as $[A_{\rm sel}]_{(:,1:k)}$, for $k=0,\dots,s_1$, which is in accordance with what $P_0$ represents and induction will be enabled. Let $N_{k+1}$ denote the subset of $P_k$ in which $[A_{s_1}]_{(:,k+1)} \not\in \mathcal{B}_{k+1}$, i.e. $[A_{s_1}]_{(:,k+1)}$ was never attempted for the selection of $[A_{\rm sel}]_{(:,k+1)}$, for $k=0,\dots,s_1-1$. For the base case, we consider the attempts made on selecting the first column of $A_{\rm sel}$, which happen at the instants with $t=1$, regardless of what $t_{\rm act}$ is, as reflected by $\mathcal{B}_1$. We categorize $P_0$ into $P_0= P_1 \cup N_1$ with $P_1 \cap N_1= \emptyset$, according to whether $[A_{s_1}]_{(:,1)} \in \mathcal{B}_1$ or not: $P_1$ denotes the subset of $P_0$ in which $[A_{s_1}]_{(:,1)} \in \mathcal{B}_1$, i.e. $[A_{s_1}]_{(:,1)}$ was attempted at $t=1$, $N_1$ denotes the subset of $P_0$ in which $[A_{s_1}]_{(:,1)} \not\in \mathcal{B}_1$, i.e. $[A_{s_1}]_{(:,1)}$ was never attempted at $t=1$. Based on the above mentioned implication, since (\ref{asel_col_perm}) is satisfied with $k=1$, $N_1$ gives rise to success, and $P_1$ gives rise to the selection of $[A_{s_1}]_{(:,1)}$ as $[A_{\rm sel}]_{(:,1)}$ (this can be justified based on Lines $9$ and $12$ in Algorithm \ref{alg2}). At this stage, we have some doubt if $P_1$ will lead to some generalized failure, while such possibility will eventually be ruled out as we perform further categorization on $P_1$. For the inductive step, consider the attempts made on selecting the $(k+1)$-th column of $A_{\rm sel}$ (with $[A_{\rm sel}]_{(:,1:k)}$ already specified), which happen at the instants with $t=k+1$, regardless of what $t_{\rm act}$ is. It can be easily verified that, the complimentary set of $N_{k+1}$ in $P_k$ is $P_{k+1}$, since based on (\ref{asel_col_perm}), $[A_{s_1}]_{(:,k+1)}$ will be selected as $[A_{\rm sel}]_{(:,k+1)}$ if it is attempted. Thus we now have $P_k= P_{k+1} \cup N_{k+1}$ with $P_{k+1} \cap N_{k+1}= \emptyset$ for $k=1,\dots,s_1-1$, and eventually we have
\begin{equation} \label{cat1_alg2}
P_0= \cup_{k=1}^{s_1} N_k \cup P_{s_1}
\end{equation}
with the individual sets on the right hand side being mutually exclusive. Similar to the case of $N_1$, each $N_k$ in (\ref{cat1_alg2}) gives rise to success. It is also clear that $P_{s_1}$ gives rise to success, since it has $A_{s_1}$ specified as $A_{\rm sel}$, and $A_{s_1} {\bf g}'_{s_1}= {\bf h}$ holds with ${\bf g}'_{s_1} \geq {\bf 0}$. It then follows that $P_0$ gives rise to success, i.e. Algorithm \ref{alg2} is able to find a sparse probability vector successfully when inclusion is present.

\bibliographystyle{IEEEtran}
\bibliography{reflist}

\end{document}